\documentclass[aps,pra,twocolumn,footinbib,
showpacs,showkeys,longbilbliography]{revtex4-1}

\usepackage{graphicx}
\usepackage{indentfirst}
\usepackage{braket}
\usepackage{float}
\usepackage{amsmath}
\usepackage{amssymb}
\usepackage{verbatim}
\usepackage{wasysym}
\usepackage{epstopdf}
\usepackage{CJK}
\usepackage{esint}
\usepackage{color}
\usepackage{xcolor}
\usepackage{epsfig}
\usepackage{subfigure}
\usepackage{amsfonts}
\usepackage{footmisc}
\usepackage{scrextend}
\usepackage{multirow}
\usepackage{mathtools}

\usepackage[h]{esvect}

\usepackage{xr-hyper}
\usepackage[hyperfootnotes=false]{hyperref}

\usepackage[english]{babel}
\usepackage{url}
\usepackage{bm}
\definecolor{darkblue}{rgb}{0,0,0.5}
\hypersetup{
colorlinks=true,
linkcolor=black,
filecolor=black,
citecolor=darkblue,
urlcolor=darkblue,
}

\urlstyle{same}

\newcommand{\defeq}{\vcentcolon=}

\newcommand\mc[1]{\mathcal{#1}}

\newcommand\bs[1]{\boldsymbol{#1}}
\newcommand\bff[1]{\text{\textbf{#1}}}

\DeclareMathOperator*{\argmax}{arg\,max}
\DeclareMathOperator*{\argmin}{arg\,min}

\newtheorem{theorem}{Theorem}

\newtheorem{corollary}{Corollary}

\newenvironment{proof}[1][Proof]{\noindent\textbf{#1.} }{\ \rule{0.5em}{0.5em}}

\begin{document}

\title{End-to-End Capacities of Imperfect-Repeater Quantum Networks}
\author{Cillian Harney}
\email{cth528@york.ac.uk}
\author{Stefano Pirandola}
\email{stefano.pirandola@york.ac.uk}
\affiliation{Department of Computer Science, University of York, York YO10 5GH, United Kingdom}
\begin{abstract}
The optimal performance of a communication network is limited not only by the quality of point-to-point channels, but by the efficacy of its constituent technologies. Understanding the limits of quantum networks requires an understanding of both the ultimate capacities of quantum channels and the efficiency of imperfect quantum repeaters. 
In this work, using a recently developed node-splitting technique which introduces internal losses and noise into repeater devices, we present achievable end-to-end rates for noisy-repeater quantum networks. These are obtained by extending the coherent and reverse coherent information (single channel capacity lower bounds) into end-to-end capacity lower bounds, both in the context of single-path and multi-path routing. These achievable rates are completely general, and apply to networks composed of arbitrary channels arranged in general topologies. Through this general formalism, we show how tight upper-bounds can also be derived by supplementing appropriate single-edge capacity bounds. As a result, we develop tools which provide tight performance bounds for quantum networks constituent of channels whose capacities are not exactly known, and reveal critical network properties which are necessary for high-rate quantum communications.
This permits the investigation of pertinent classes of quantum networks with realistic technologies; qubit amplitude damping networks and bosonic thermal-loss networks.
\end{abstract}
\maketitle

\section{Introduction}

The theoretical and experimental investigation of quantum networks \cite{SlepianNets, CoverThomas, TanenbaumNets, GamalNets, KimbleQI,UniteQInt, RazaviQNet} is essential for the deployment of quantum information technologies on a global scale. 
Quantum networks will not only enable provably secure communication through quantum key distribution (QKD) \cite{GisinQCrypt, AdvCrypt}, but will facilitate distributed quantum information processing. As such, a global quantum internet will be pivotal to the expansion and empowerment of future quantum technologies.

Yet, the ability to transmit, detect and preserve quantum information is remarkably more difficult than that of classical information. Classical information is robust, can be copied and stored reliably. On the other hand, quantum information is inherently fragile and the laws of quantum mechanics prohibit its cloning \cite{Mike_Ike}. Hence, quantum networks face many formidable obstacles from which their classical counterparts are spared. These can be attributed to two key regimes: Inevitable \textit{external} decoherence experienced by quantum systems along communication channels; and \textit{internal} decoherence due to imperfect devices attempting to preserve or operate on quantum systems. 

The impact of external decoherence on the performance of quantum communication networks is understood via channel capacities. For a quantum channel $\mc{E}$, the capacity $\mc{C}(\mc{E})$ describes the absolute maximum rate that entanglement, secret-keys or quantum states can be distributed using perfect transmitters/receivers by either user. Significant advancements in study of channel capacities have propelled our understanding of the limits of quantum communications, notably in optical-fibre \cite{PirPatron09, PLOB} and free-space \cite{FS,SQC}. Fundamental limitations have also been unveiled for quantum networks composed of teleportation-covariant channels \cite{End2End, MultiEnd}; providing invaluable tools for the investigation of quantum network capacities for both random \cite{QuntaoRandQNets, ZhangQInt} and ideal \cite{OPGQN} optical-fibre networks. 

Previous investigations have operated under the assumption of ideal repeaters. Unfortunately, there will always exist unavoidable internal decoherence due to sub-optimal detection or transmission, imperfect quantum memories, electronic and environmental noise, and more. The nature of this decoherence depends on the type of quantum system being used, whether one is using discrete-variable \cite{Mike_Ike, WatrousTxt, Holevo19} (DV) or continuous-variable (CV) \cite{RalphCV, BraunsteinVL, GaussRev,SerafiniCV} quantum information and the protocol being employed. Nonetheless, such effects must be considered to identify ultimate limits for realistic technologies. 

Recently, the performance limits of quantum networks with imperfect repeaters have been studied using a \textit{node-splitting} technique; each channel in the network is split into compound channels which incorporate repeater imperfections via additional internal channels \cite{RicSplitting}. This work focussed on the impact of internal loss and the class of distillable channels \cite{PLOB}, but it is possible to extend these results into a more general setting. In this work, we provide a framework for bounding the end-to-end capacities of noisy-repeater quantum networks. We begin by translating the coherent information (CI) and reverse coherent information (RCI) from point-to-point achievable quantum communication rates \cite{PirPatron09} into lower-bounds on end-to-end network capacities. The resulting bounds are universal, regardless of network topology or channel composition. Combined with relevant upper-bounds, we are able to apply a node-splitting technique which accounts for both internal loss and noise. In particular, we unveil realistic performance limits and infrastructure requirements of quantum networks composed of channels whose capacities are not exactly known. 

This paper proceeds as follows: Section~\ref{sec:Prelims} details some preliminary notions, covering the basics of quantum networks, single/multi-path end-to-end capacities and useful architectures. In Section~\ref{sec:GenBounds} we present general end-to-end capacity bounds for quantum networks with arbitrary topologies. In particular, we present achievable end-to-end rates which are universal for any quantum network, and detail how upper-bounds can be found when single-edge capacity bounds are known. Furthermore, we describe the node-splitting technique and discuss how these tools can be used to capture critical properties of quantum network architectures. These methods and results are then applied in the context of qubit amplitude-damping and bosonic thermal-loss weakly-regular networks; a class of highly connected and analytically treatable architectures. Finally, Section~\ref{sec:Conc} provides concluding remarks and future directions of study. 

\section{Preliminaries\label{sec:Prelims}}
\subsection{Quantum Networks, Physical and Logical Flow}
A quantum network can be described using an underlying, undirected graph $\mc{N} = (P,E)$ where $P$ is a set of network nodes, and $E$ is a set of edges that inter-connect these nodes throughout the network. A node $\bs{x} \in P$ contains a local register of quantum systems, and may represent a repeater station or a user occupied station.  A network edge exists if the unordered pair $(\bs{x},\bs{y})$ is contained within the total edge set $E$. If $(\bs{x},\bs{y})\in E$ then a quantum communication channel $\mc{E}_{\bs{xy}}$ exists between $\bs{x}$ and $\bs{y}$ through which quantum systems can be exchanged. 

There is important nuance surrounding the reason why quantum networks are treated as undirected graphs. Under the assistance of two-way classical communications (CCs), the optimal transmission of quantum information is connected with optimal entanglement distribution. It does not depend on the direction of physical system exchange (physical flow) but on the local operations (LOs) that are applied at each point, and thus the direction of teleportation. This refers to the \textit{logical direction} of quantum communication (or logical flow). The physical orientation of any channel $\mc{E}_{\bs{xy}}$ in the network can be forwards or backwards and still facilitate communication in either logical direction. 

If a channel is physically symmetric then the physical forward and backward channels between two network nodes $\bs{x},\bs{y} \in P$ are identical $\mc{E}_{\bs{xy}} = {\mc{E}_{\bs{x}\rightarrow\bs{y}} = \mc{E}_{\bs{y}\rightarrow\bs{x}}}$. Networks built up of physically symmetric channels are easy to represent on an undirected graph as the physical channel direction is then completely irrelevant.
But we must take particular care when there exist channels that are \textit{physically asymmetric}, i.e.~the physical forward and backward channels are not identical, ${\mc{E}_{\bs{x}\rightarrow\bs{y}} \neq \mc{E}_{\bs{y}\rightarrow\bs{x}}}$. In this case, a quantum network can be described as a collection of forward and backward channels associated with each undirected edge, $\{ (\mc{E}_{\bs{x}\rightarrow\bs{y}}, \mc{E}_{\bs{y}\rightarrow\bs{x}})\}_{(\bs{x},\bs{y}) \in E}$. 

One may assume that it is sensible to convert the undirected network graph into a directed graph to account for the unique physically directed channels. Yet, the logical flow of information remains independent from physical flow of quantum systems, as it is always possible to invoke a teleportation protocol in the opposite direction to physical quantum system exchange. But if a channel is physically asymmetric there will be an imbalance between the capacity of the forward and backward channels, since they are generally unique. As a result, there will exist an optimal physical direction in which quantum systems should be exchanged.

This identifies a crucial optimisation step of the physical orientation of a quantum network, and corresponds to fixing an optimal physical direction to all channels within the network. 
Consider an arbitrary edge $(\bs{x},\bs{y})\in E$ and its associated pair of physical forward and backward channels $ (\mc{E}_{\bs{x}\rightarrow\bs{y}}, \mc{E}_{\bs{y}\rightarrow\bs{x}})$. The directed channel pair should be mapped to the channel which maximises the point-to-point capacity,
\begin{align}
 (\mc{E}_{\bs{x}\rightarrow\bs{y}}, \mc{E}_{\bs{y}\rightarrow\bs{x}}) \mapsto \mc{E}_{\bs{xy}}^* \defeq \hspace{-3mm}\underset{\mc{E}\in\{\mc{E}_{\bs{x}\rightarrow\bs{y}}, \mc{E}_{\bs{y}\rightarrow\bs{x}}\}}{\argmax} {\mc{C}}(\mc{E}).
 \label{eq:CapMax}
\end{align}
where $\mc{C}(\mc{E})$ is the capacity of the channel $\mc{E}$. By performing this optimisation for all $(\bs{x},\bs{y})\in E$, the network can be represented as an undirected graph $\mc{N} = (P,E)$ interconnected by the optimal physically directed set of channels $\{ \mc{E}_{\bs{xy}}^{*} \}_{(\bs{x},\bs{y})\in E}$ for quantum communication throughout the network. 

\begin{figure*}
\includegraphics[width=0.95\linewidth]{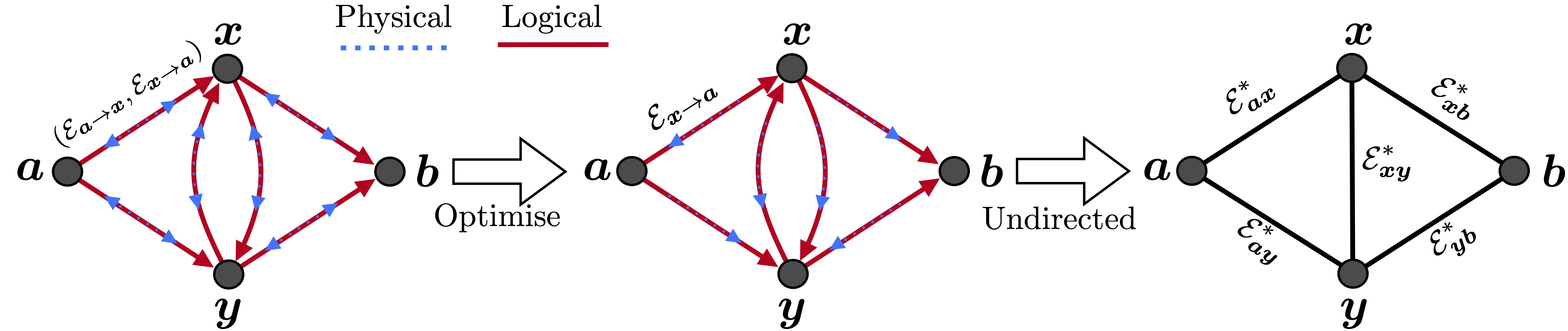}
\caption{Optimising the physical orientation of a quantum network. The physical flow of a quantum systems is independent of the logical flow of information, and thus physical directions can always be optimised to utilise the directed channels which possess the greatest capacity. In this way, a quantum network described by a directed graph can always be reduced to an undirected form. From a pair of physically-directed channels on a single edge $(\mc{E}_{\bs{a}\rightarrow\bs{x}}, \mc{E}_{\bs{x}\rightarrow\bs{a}})$, we can choose that which as the greatest capacity according to Eq.~(\ref{eq:CapMax}), e.g.~$\mc{E}_{\bs{x}\rightarrow{\bs{a}}}$ in the example above. It is always possible to use this superior edge for logical communication in either direction, reducing it to an optimal undirected edge describing the channel $\mc{E}_{\bs{ax}}^* = \mc{E}_{\bs{x}\rightarrow{\bs{a}}}$.}
\label{fig:explain_transform}
\end{figure*}

\subsection{Network Cuts}

Consider a general quantum network $\mc{N}=(P,E)$ which has been pre-optimised to adopt some optimal physical orientation of quantum channels $\{\mc{E}_{\bs{xy}}^*\}_{(\bs{x},\bs{y})\in E}$. Let denote an end-user pair who wish to communicate using the two element set of nodes $\bs{i} = \{ \bs{\alpha}, \bs{\beta} \}$, such that $\bs{\alpha} \in P$ is Alice and $\bs{\beta}\in P$ is Bob. Collecting the end-user pair into a single object $\bs{i}$ is useful for subsequent notation. The goal of end-to-end communication is then to utilise network resources to distribute secret-keys, entanglement or quantum states between $\bs{\alpha}$ and $\bs{\beta}$; all other nodes on the network can be treated as repeaters serving this purpose. 

A vital tool for the study of end-to-end performance is the concept of \textit{network cuts}. We define a network cut $C$ as a means of partitioning a quantum network into two, disjoint super-users: Super-Alice $\bff{A}$ and super-Bob $\bff{B}$ such that $P = \bff{A} \cup \bff{B}$ and $\bs{\alpha}\in\bff{A}$, $\bs{\beta}\in\bff{B}$. The cut $C$ generates an associated cut-set $\tilde{C}$ which collects all of the network edges that must be removed in order to consolidate the partition. The cut-set satisfies
\begin{equation}
{\tilde{C} = \{ (\bs{x},\bs{y})\in E ~|~\bs{x}\in \bff{A}, \bs{y} \in \bff{B}\}} 
\end{equation}
Consequently, the cut partitions the network node and edge sets $(P,E)$ in the following way
\begin{equation}
P \xrightarrow{\text{Cut: } C} \bff{A} \cup \bff{B},~~~E \xrightarrow{\text{Cut: } C} E \setminus \tilde{C}.
\end{equation}
Network cuts are vital to many network optimisation tasks and are used to derive the ultimate limits of network routing protocols.

There are two main routing strategies that facilitate end-to-end communication over quantum networks and which find analogy with classical network theory: Single-path and multi-path routing \cite{GamalNets,End2End}.
Single-path routing describes a network protocol in which quantum systems are exchanged from node-to-node throughout the network one after another. This sequential strategy forges a unique path (or end-to-end route) of interactions through the network, and continues until quantum communication has been established between the end-users. A repeater-chain is an example of a network which utilises single-path routing. The single-path network capacity ${\mc{C}}^s(\bs{i}, \mc{N})$ defines the optimal rate at which communication can be achieved through this routing method. This is computed by determining the minimum network cut $C_{\min}$ that generates the smallest, maximum single-edge capacity in the cut set. We may compute the single-path capacity as \cite{End2End}
\begin{equation}
{\mc{C}}^{s}(\bs{i}, {\mc{N}}) = \min_C \max_{(\bs{x},\bs{y})\in \tilde{C}}  \mc{C}(\mc{E}_{\bs{xy}}^*).   \label{eq:SPR}
\end{equation}
Finding the optimal route is equivalent to solving the well known widest-path problem \cite{Dijkstra}.

One can instead design a protocol that uses multi-point network interactions. Rather than operating sequentially, network nodes may exchange multiple quantum systems with many receiver nodes in an effort to explore many more possible routes from $\bs{\alpha}$ to $\bs{\beta}$ and vice versa. Multi-path routing enables end-users to use multiple end-to-end routes simultaneously, improving both the robustness and rate capabilities of communication. The optimal multi-path strategy is called a \textit{flooding protocol}, in which all channels are used precisely once per end-to-end transmission. This is achieved by performing non-overlapping point-to-multi-point transmissions at each network node, such that receiving nodes only choose to transmit along unused edges for subsequent connections. The multi-path network capacity (or flooding capacity) ${\mc{C}}^m(\bs{i}, \mc{N})$ captures the optimal end-to-end performance out of all possible strategies.

The flooding capacity is found by locating a network cut which minimises a multi-edge capacity over all possible cut-sets. Here, we define a multi-edge capacity of a cut $C$ as the sum of all the single-edge capacities in the corresponding cut-set, 
\begin{equation}
\mc{C}^m(C) \defeq \sum_{(\bs{x},\bs{y})\in \tilde{C}} \mc{C}(\mc{E}_{\bs{xy}}^*).
\end{equation}
The flooding capacity is therefore found by locating the minimum network cut $C_{\min}$ which minimises this quantity \cite{End2End},
\begin{equation}
{\mc{C}}^{m}(\bs{i}, \mc{N}) = \min_C \mc{C}^m(C) = \min_C \sum_{(\bs{x},\bs{y})\in \tilde{C}} \mc{C} (\mc{E}_{{\bs{xy}}}^*) \label{eq:MPR}.
\end{equation}
Computing the flooding capacity is equivalent to solving the max-flow min-cut problem. For a general network architecture there are many efficient numerical techniques \cite{FordFlow,KarpFlow,OrlinFlow}. 

However, it is always possible to place an upper-bound on the flooding capacity by considering an intuitive network-cut. Let the node (edge) neighbourhood of a network node $\bs{x} \in P$ be define as the collection of nodes (edges) that are directly connected to $\bs{x}$. We denote the node and edge neighbourhood sets respectively via,
\begin{align}
&N_{\bs{x}} \defeq \{ \bs{y} ~|~\forall (\bs{x},\bs{y})\in E \},\\
&E_{\bs{x}} \defeq \{ (\bs{x},\bs{y}) ~|~\forall \bs{y} \in N_{\bs{x}} \}.
\end{align}
Given an end-user pair $\bs{i} = \{ \bs{\alpha}, \bs{\beta}\}$ it is always possible to perform a cut that partitions $\bs{\alpha}$ and $\bs{\beta}$ by removing all of the edges in either of their user neighbourhoods, $E_{\bs{\alpha}}$ or  $E_{\bs{\beta}}$. Every end-to-end route must start and finish with user connected edges, therefore the removal of $E_{\bs{\alpha}}$ or $E_{\bs{\beta}}$ eliminates the ability to complete an end-to-end route. We call this kind of network-cut \textit{user node isolation}. User-node isolation can always be used to place (at least) an upper-bound on the end-to-end flooding capacity,
\begin{equation}
{\mc{C}}^{m}(\bs{i}, \mc{N}) \leq \mc{C}_{\mc{N}_{\bs{i}}}^m \defeq \min_{\bs{j}\in\bs{i}} \sum_{(\bs{x},\bs{y})\in E_{\bs{j}}} \mc{C} (\mc{E}_{{\bs{xy}}}^*) \label{eq:FloodUB},
\end{equation}
where we have defined $\mc{C}_{\mc{N}_{\bs{i}}}^m$ as the \textit{min-neighbourhood capacity}, i.e.~the minimum capacity generated by cutting either user neighbourhood \cite{OPGQN}. Since user-node isolation is a valid network-cut the upper-bound is always achievable.

\subsection{Weakly-Regular Quantum Networks \label{sec:WRNs}}

A challenging aspect of benchmarking end-to-end quantum communications via Eqs.~(\ref{eq:SPR}) and (\ref{eq:MPR}) involves designing appropriate network architectures on which the single-path and multi-path capacities can be computed efficiently. Recently proposed in Ref.~\cite{OPGQN}, we make use of \textit{weakly-regular} (WR) architectures. Weakly-regular networks (WRNs) are capable of modelling large-scale, highly-connected quantum networks with a good degree of spatial and topological freedom; all the while permitting analytical derivations of optimal end-to-end rates and critical network properties that contribute to high performance. 

Consider an undirected graph $\mc{N} = (P,E)$ representing a network architecture. Expectedly WRNs admit the graph theoretic property of \text{regularity}; every node $\bs{x}\in P$ has the same degree $\text{deg}(\bs{x}) = k$ for all $\bs{x}\in P$. In other words, the nodal neighbourhood of every network node has the same number of elements,
\begin{equation}
\text{deg}(\bs{x}) = | N_{\bs{x}} | = k, \text{ for all } \bs{x} \in P. 
\end{equation}
Regularity offers an important simplification for analytical investigation. Indeed, general degree distributions are often a key source of complex behaviour in network models. 

\textit{Weakness} of regularity is a more intricate feature. While the degree of a node measures the number of other nodes that it is connected to, a secondary measure that helps to capture how connections are shared between pairs of nodes is called the \textit{commonality}. The commonality of any pair of nodes $\bs{x}, \bs{y}$ counts the number of nodes that both $\bs{x}$ and $\bs{y}$ are commonly connected to. That is, the commonality between any pair of nodes can be computed as the cardinality of the overlap of their node neighbourhood sets,
\begin{equation}
\text{com}({\bs{x}}, {\bs{y}}) = | N_{\bs{x}} \cap N_{\bs{y}} |.
\end{equation}
One can imagine that for a large-scale network where nodes may be very distant, large proportion of nodes will possess a commonality of zero, i.e.~they will not share any neighbours. Hence, it is useful to define a slightly more specific quantity called the \textit{adjacent-commonality}, which counts the number of common neighbours shared by a pair of directly connected nodes (those which share an edge). We denote the adjacent-commonality between the adjacent nodes $\bs{x},\bs{y}$ via $\lambda_{\bs{x}}^{\bs{y}}$ such that
\begin{equation}
\lambda_{\bs{x}}^{\bs{y}} = \lambda_{\bs{y}}^{\bs{x}} = | N_{\bs{x}} \cap N_{\bs{y}} | \iff (\bs{x},\bs{y})\in E.
\end{equation}
In a strongly-regular graph, $\lambda_{\bs{x}}^{\bs{y}}$ is constant for all pairs of network nodes, resulting in a very restricted and small architecture. In contrast, weak-regularity permits a looser characterisation of commonality parameters. Hence, strength or weakness of regularity refers to the consistency of the neighbour sharing properties between pairs of nodes throughout the network. 

In this work, we consider weak-regularity in the following way: Any node $\bs{x} \in P$ in a $k$-WR graph may have a unique \textit{adjacent commonality multi-set} (a modified, potentially degenerate set) which counts the number of neighbours shared between $\bs{x}$ and all its neighbours $\bs{y} \in N_{\bs{x}}$. More precisely, we define the multi-set for any node $\bs{x}$,
\begin{equation}
\bs{\lambda}_{\bs{x}} \defeq \{ \lambda_{\bs{x}}^{\bs{y}_1}, \lambda_{\bs{x}}^{\bs{y}_1}, \ldots \lambda_{\bs{x}}^{\bs{y}_k}\} = \{ \lambda_{\bs{x}}^{\bs{y}} \}_{\bs{y}\in N_{\bs{x}}},
\end{equation} 
Any node in any graph (WR or not) possesses an adjacent commonality multi-set. Here we choose to focus on graphs for which each $\bs{\lambda}_{\bs{x}}$ is contained within a known spectrum of multi-sets. That is, there exists a non-degenerate super-set of permitted adjacent commonality multi-sets,
\begin{equation}
\bs{\Lambda} \defeq \{ \bs{\lambda}_{1}, \bs{\lambda}_{2}, \ldots, \bs{\lambda}_{M} \},
\end{equation} 
so that the adjacent commonality multi-set of any node in the network $\bs{x}\in P$ belongs to the superset $\bs{\lambda}_{\bs{x}} \in \bs{\Lambda}$. One can derive $\bs{\Lambda}$ for any graph, but the consistency of regularity simplifies the number of possible multi-sets.

Consequently, we refer to $(k,\bs{\Lambda})$-WRNs as a class of network for which all nodes have a constant degree equal to $k$ and for which their neighbour-sharing properties satisfy $\bs{\lambda}_{\bs{x}} \in \bs{\Lambda}$, for all $\bs{x}\in P$. While $k$ and $\bs{\Lambda}$ impose connectivity constraints, WRNs that belong to this class are free to adopt a vast range of topological or spatial configurations. Furthermore, we avoid explicit referencing of the number of network nodes $n$. Instead, $n$ is encoded into properties of the network such as the nodal density $\rho_{\mc{N}}$ which defines the average number of network nodes per unit of area (studied in subsequent sections).

Using WRNs we are able to analytically describe and construct useful network structures. For example, it is easy to design a WR \textit{network cell}; a small subgraph which can be iteratively concatenated to generate a large-scale network obeying weak-regularity within some nodal boundary. This permits the analytical investigation of networks consisting many nodes which are highly-connected and display realistic properties. For more details on the complete characterisation of WRNs, we refer the reader to the Supplementary Material of Ref.~\cite{OPGQN}.

\section{Bounds for Realistic Quantum Networks \label{sec:GenBounds}}

Computing end-to-end capacities in Eqs.~(\ref{eq:SPR}) and (\ref{eq:MPR}) requires knowledge of the exact single-edge capacities of all channels in the network, $\mc{C}(\mc{E}_{\bs{xy}}^*)$. For some important classes of quantum channels, exact capacities are known; such as the class of distillable quantum channels, which include the bosonic lossy channel, quantum-limited amplifiers and dephasing channel \cite{PLOB}. As a consequence, their end-to-end network capacities have been fully characterised for arbitrary topologies \cite{End2End}. In general, this is not the case and the capacities of many quantum channels remain undetermined. Hence, we rely on tight upper and lower-bounds in order to understand their efficacy for quantum communication. In this Section we extend point-to-point channel capacity bounds into end-to-end capacity bounds to characterise the performance of general quantum networks.

\subsection{Achievable Rates}
It is always possible to express lower-bounds on a quantum channel capacity using the coherent information (CI) and the reverse coherent information (RCI) \cite{PirPatron09}. Consider a quantum channel $\mc{E}$ and a maximally entangled bipartite EPR state $\Phi_{AB}$, where $A$ and $B$ denote the constituent quantum systems. The Choi matrix of $\mc{E}$ is the result of passing the sub-systems $B$ through the quantum channel, while keeping $A$ preserved i.e.~$\rho_{\mc{E}} \defeq \mc{I}\otimes \mc{E}(\Phi_{AB})$, where $\mc{I}$ denotes the identity channel. 
The CI and RCI of the channel read \cite{PirPatron09}
\begin{align}
I_{\text{C}}(\mc{E}) &\defeq S\left[ \text{Tr}_{A}(\rho_{\mc{E}}) \right] - S(\rho_{\mc{E}}),\\
I_{\text{RC}}(\mc{E}) &\defeq S\left[ \text{Tr}_{B}(\rho_{\mc{E}}) \right] - S(\rho_{\mc{E}}).
\end{align}
where $S(\cdot)$ is the von Neumann entropy. 
These are valuable quantum information theoretic quantities which suggest achievable rates for forward (CI) and backward (RCI) one-way entanglement distillation. Indeed, for any quantum channel $\mc{E}$, the generic two-way assisted capacity $\mc{C}(\mc{E})$ can always be lower-bounded by the hashing inequality \cite{DevWinter},
\begin{equation}
\mc{C}(\mc{E}) \geq {I}(\mc{E}) \defeq \max\left\{ I_{\text{C}}(\mc{E}), I_{\text{RC}}(\mc{E}) \right\}, \label{eq:HashIneq}
\end{equation}
where we have introduced the quantity $I(\mc{E})$ which implicitly maximises over the CI and RCI. This lower-bound is true for any quantum channel \footnote{
For bosonic quantum systems the Choi matrix is energy unbounded, since the maximally entangled state is an infinitely squeezed Two-Mode Squeezed Vacuum (TMSV) state. Thus, additional care must be taken when computing these quantities. Let us denote an energy constrained TMSV state $\Phi^{\mu}$, where $\mu \defeq \bar{n}+1/2$ and $\bar{n}$ is the mean photon number per mode. Then the maximally entangled state takes the form $\Phi_{AB} \defeq \lim_{\mu\rightarrow \infty} \Phi_{\mu}$. 
As a result, we can define an asymptotic Choi matrix as the sequence of finite-energy Choi approximation in the limit of infinite squeezing, $
\rho_{\mc{E}} = \lim_{\mu\rightarrow \infty} \rho_{\mc{E}}^{\mu} =  \lim_{\mu\rightarrow \infty}  \mc{I}\otimes \mc{E}(\Phi_{\mu}),
$ where $\rho_{\mc{E}}^{\mu} \defeq \mc{I}\otimes \mc{E}(\Phi_{\mu})$ is a finite energy quasi-Choi matrix. This treatment must be considered for any functional $f$ of asymptotic Choi matrices, such that it must be computed as the limit $f(\rho_{\mc{E}}) = \lim_{\mu\rightarrow \infty} f(\rho_{\mc{E}}^{\mu})$. This is true for the CI and RCI, and the REE. It can be shown that both hashing inequality in Eq.~(\ref{eq:HashIneq}) and the teleportation stretching upper-bound in Eq.~(\ref{eq:TSUB}) extend to general bosonic systems with constrained energy, and can extend to bosonic Gaussian channels in the limit of infinite energy (see Supplementary Notes 2, 4 of \cite{PLOB}).}.

Now consider a quantum network $\mc{N}=(P,E)$ with an optimal physical orientation of physical channels $\{\mc{E}_{\bs{xy}}^*\}_{(\bs{x},\bs{y})\in E}$. Since $I(\mc{E}_{\bs{xy}}^*)$ is an achievable rate for any channel in the network, it is possible to compute end-to-end capacity lower-bounds by supplementing the (R)CI into Eqs.~(\ref{eq:SPR}) and (\ref{eq:MPR}). That is, we can always write,
\begin{align}
{\mc{C}}^{s}(\bs{i}, {\mc{N}}) &\geq {I}^{s}(\bs{i}, {\mc{N}}) \defeq \min_C \max_{(\bs{x},\bs{y})\in \tilde{C}}  I(\mc{E}_{\bs{xy}}^*),   \label{eq:I_SPR}\\
{\mc{C}}^{m}(\bs{i}, \mc{N}) &\geq {I}^{m}(\bs{i}, {\mc{N}}) \defeq \min_C \sum_{(\bs{x},\bs{y})\in \tilde{C}} I(\mc{E}_{{\bs{xy}}}^*) \label{eq:I_MPR}.
\end{align}
Here ${I}^{s}(\bs{i}, {\mc{N}})$ and $ {I}^{m}(\bs{i}, {\mc{N}})$ provide single-path and multi-path network generalisations of ${I}(\mc{E})$ in Eq.~(\ref{eq:HashIneq}). These are achievable network rates for a quantum network in an arbitrary topology and channel composition. When all the channels $\mc{E}_{\bs{xy}}^*$ considered within the network structure are distillable then these achievable network capacities become the exact capacities.

\subsection{Upper-Bounds}

Analogously, it is always possible to derive end-to-end capacity upper-bounds by replacing the exact single-edge capacities with upper-bounds in Eqs.~(\ref{eq:SPR}) and (\ref{eq:MPR}). Let $F(\mc{E})$ be a function which computes an upper-bound on the capacity of a channel $\mc{E}$, i.e.~$\mc{C}(\mc{E}) \leq F(\mc{E})$. Then in the network setting we can write
\begin{align}
{\mc{C}}^{s}(\bs{i}, {\mc{N}}) &\leq {F}^{s}(\bs{i}, {\mc{N}}) \defeq \min_C \max_{(\bs{x},\bs{y})\in \tilde{C}}  F(\mc{E}_{\bs{xy}}^*),   \label{eq:F_SPR}\\
{\mc{C}}^{m}(\bs{i}, \mc{N}) &\leq {F}^{m}(\bs{i}, {\mc{N}}) \defeq \min_C \sum_{(\bs{x},\bs{y})\in \tilde{C}} F(\mc{E}_{{\bs{xy}}}^*) \label{eq:F_MPR}.
\end{align}
For example, teleportation-covariant channels are those which can be simulated via a teleportation protocol using their Choi matrix as a resource \cite{PLOB, TCS}, for which distillable channels are a sub-class. For teleportation-covariant channels, even when they are not distillable, it is possible to write an upper-bound on their capacity using the relative entropy of entanglement (REE) of their Choi matrix, 
\begin{equation}
{\mc{C}(\mc{E}_{\bs{xy}}^*) \leq E_{R}(\rho_{\mc{E}_{\bs{xy}}^*}) \defeq \min_{\sigma \in \mc{D}_{\text{sep}}} S(\rho_{\mc{E}_{\bs{xy}}^*}\| \sigma),} \label{eq:TSUB}
\end{equation}
where $S(\rho\|\sigma) = \text{Tr}\left[ \rho (\log\rho -\log\sigma)\right]$ is the quantum relative entropy, and the minimisation is performed over the set of all separable bipartite states $\mc{D}_{\text{sep}}$ \cite{Note2}. The REE can then be used to write end-to-end capacity bounds for networks consistent of bosonic thermal-loss channels, Pauli channels, and more.

\subsection{Node Splitting}
Ideally, quantum repeaters are completely lossless, noiseless, and fully error-corrected. Indeed, Ref.~\cite{End2End} derives end-to-end quantum network capacities under the assumption of perfect repeaters, only considering the unavoidable decoherence due to quantum channels which are external to the repeater devices. In reality, there exist a number of internal loss/noise contributions that should be considered. 
A first step in this direction was to consider the presence of loss within quantum repeaters due to
sub-optimal detection efficiency, channel-memory coupling losses, memory loading and readout \cite{RicSplitting}. In this work, we present the general scenario where repeaters are affected by both loss and noise, incorporating electronic and environmental noise affects that may occur. 

Consider a single quantum repeater contained within a quantum network, $\bs{x} \in P$. To account for internal imperfections, we may perform \textit{node splitting} \cite{RicSplitting}. A repeater node $\bs{x} \in P$ can be split into a trio of internal nodes $\bs{x} \rightarrow \{ \bs{x}^r, \bs{x}^{u}, \bs{x}^{s}\}$: A receiver node $\bs{x}^r$, a user node $\bs{x}^{u}$, and a sender node $\bs{x}^s$. The user node $\bs{x}^{u}$ represents the only valid node from which communication can originate or end, or where a user can actually be situated. Through node splitting, we can represent decoherence effects due to imperfect transmission and reception via additional quantum channels between the internal nodes.
Internal noise and loss due to imperfect reception and storage of quantum information at the node $\bs{x}$ can be described by a quantum channel physically directed from the receiver node to the user node $\mc{E}_{\bs{x}^r \rightarrow \bs{x}^u}$. 
Similarly, imperfections in the memory loading and transmission process can be captured via an internal channel between the user node and the sender node, 
$\mc{E}_{\bs{x}^u \rightarrow \bs{x}^s}$. Each internal channel will possess unique properties according to the technologies used throughout the network. 

We may consider communication between two repeater nodes $\bs{x} =  \{ \bs{x}^r, \bs{x}^{u}, \bs{x}^{s}\}$ and $\bs{y} =  \{ \bs{y}^r, \bs{y}^{u}, \bs{y}^{s}\}$ within a quantum network $\mc{N}$. We do not assume the precise nature of the external or internal channels, focussing first on a general picture. Since the internal imperfections at each node in general are unique, then the physically directed channels between $\bs{x}$ and $\bs{y}$ will be asymmetric. 
For quantum communication in the physical direction $\bs{x}\rightarrow\bs{y}$ the complete channel is a compound channel given by
\begin{align}
\mc{E}_{\bs{x}\rightarrow \bs{y}} &=  \mc{E}_{\bs{y}^r\rightarrow \bs{y}^u} \circ \mc{E}_{\bs{x}^s\rightarrow \bs{y}^r} \circ \mc{E}_{\bs{x}^u\rightarrow \bs{x}^s},
\end{align}
accounting for each channel between the user node in $\bs{x}^u \in \bs{x}$ and the user node in $\bs{y}^u \in\bs{y}$. For the physical exchange of quantum systems in the opposite direction, the compound channel reads
\begin{align}
\mc{E}_{\bs{y}\rightarrow \bs{x}} &=  \mc{E}_{\bs{x}^r\rightarrow \bs{x}^u} \circ \mc{E}_{\bs{y}^s\rightarrow \bs{x}^r} \circ \mc{E}_{\bs{y}^u\rightarrow \bs{y}^s}.
\end{align}
Applying node splitting to every edge in the network $(\bs{x},\bs{y})\in E$, and selecting the best physically directed channel for each edge in the network, we can then retrieve the optimal physical orientation of a realistic quantum network composed of imperfect repeaters, $\{ \mc{E}_{\bs{xy}}^* \}_{(\bs{x},\bs{y})\in E}$. 

\subsection{Benchmarking with Weakly-Regular Networks\label{sec:WRBench}}

In order to evaluate the end-to-end capacity bounds and node splitting procedure we need to investigate suitable network architectures. It is an open question as to how quantum networks should be best constructed on mid-to-large scales in order to balance high-rates with cost-efficient resources. Questions of this form have been recently tackled numerically via the statistical study of complex, random quantum networks \cite{BritoRandQNets, QuntaoRandQNets, ZhangQInt}. These works have been able to identify insightful phenomena of large-scale quantum networks, primarily concerned with channel length and network nodal density. 

The study of complex architectures such as Waxman, Erd\H{o}s-R\'{e}nyi and scale-free networks rely heavily upon numerical assessments, while analytical treatments of quantum repeater networks have been mostly limited to linear networks. Linear networks (or repeater chains) are effective for studying extended point-to-point communications, but are too simplistic to model large, inter-connected structures which might contain many users. To address a common ground between complex architectures and simplified repeater chains we employ WRNs as described in Section~\ref{sec:WRNs}. WRNs are capable of balancing aspects of ideality (promising, consistently high-connectivity through regularity) with reality (physically realisable with no further spatial or topological constraints). As such, they offer a versatile, analytical tool for quantum network benchmarking.

In a highly-connected network, there exist a large number of end-to-end routes between any pair of network nodes. Clearly this is desirable and  ideal for multi-path routing strategies, as end-users can simultaneously exchange quantum systems along many routes and enhance their end-to-end capacity over single-path methods. By definition, WRNs observe very consistent connective properties, while allowing substantial topological and spatial freedom. On a large-scale WRNs manifest as a realistic, well connected class of architecture which offers a very useful model for investigating the resource requirements of high-rate quantum networks.

The optimal performance of WRNs is surprisingly predictable and relates to the concept of \textit{network cut growth}. The end-to-end flooding capacity in Eq.~(\ref{eq:MPR}) is characterised by a cut $C_{\min}$ which minimises the multi-edge capacity across all possible cuts on the network. In a WRN (and other highly-connected networks) the existence of many end-to-end routes between end-users requires the collection of many more edges in a cut-set $\tilde{C}$ to successfully partition them. Interestingly, a relationship between cut-set cardinality $|\tilde{C}|$ and distance from an end-user then emerges; as a network becomes more highly-connected, trying to perform cuts using edges that are further away from $\bs{\alpha}$ or $\bs{\beta}$ requires the collection of many more edges (i.e.~cut growth).

Due to network cut growth it becomes increasingly likely that the minimum cut $C_{\min}$ will be one which collects a relatively small number of edges, i.e.~it is easier to minimise the multi-edge capacity if there are fewer edges to sum. By this logic, there should exist some minimum single-edge capacity for any edge $(\bs{x},\bs{y})\in E$, $\mc{C}_{\bs{xy}} \geq \mc{C}_{\min}$, so that the minimum cut is \textit{exactly} the cut-set which collects the fewest number of edges. If the network cut growth of a particular network model can be characterised it is possible to derive $\mc{C}_{\min}$ and reveal invaluable information for quantum network design. 

Crucially, WRNs admit an analytical form which makes it possible to compute how cut-sets grow in cardinality with respect to increasing distance from an end-user. Using this information in conjunction with the upper-bound in Eq.~(\ref{eq:FloodUB}), it is possible to derive conditions for which optimal flooding performance is guaranteed. More precisely, WRNs allow us to derive \textit{threshold theorems}; theorems which reveal threshold single-edge capacities $\mc{C}_{\min}$ that guarantee specific end-to-end performance bounds. 

Ref.~\cite{OPGQN} introduced these threshold theorems and applied them to bosonic lossy WRNs. In this work, we generalise threshold theorems to study the optimal performance of WRNs constituted of any quantum channel, even if their exact capacity is not known. Using the end-to-end capacity bounds from Section~\ref{sec:GenBounds} and the technique of node splitting, we are able to derive bounds on critical parameters such as maximum channel length, maximum internal loss or maximum thermal noise which are able guarantee optimal performance. Appendix~\ref{sec:ThreshTheorems} collects these theorems which are put to use in the following section.

\section{Applications}

Using the tools of Eqs.~(\ref{eq:I_SPR})-(\ref{eq:F_MPR}) we benchmark the optimal performance of quantum networks with imperfect repeaters. We incorporate realistic, noisy channels throughout the network for which the exact capacities are not known. This allows us to investigate the end-to-end rates of lesser studied network models such as amplitude-damping networks and bosonic thermal-loss networks.

\subsection{Amplitude Damping Networks}
\subsubsection{General Bounds}
We begin with networks consistent of amplitude damping (AD) quantum channels. AD channels are qubit channels that describe the process of energy dissipation through spontaneous emission, representing the analogue of a bosonic lossy channel constrained to a two-level system. They are ubiquitous in the modelling of many important physical processes in communications and computation. For damping-probability (or loss) $p \in (0,1)$, the AD channel can be defined via the Kraus decomposition,
\begin{gather}
\mc{E}_p (\rho) = \sum_{i=0}^1 K_i \rho K_i^{\dag}, \\
K_0 \defeq \ket{0}\!\bra{0} + \sqrt{1-p} \ket{1}\!\bra{1}, \> \> K_1 \defeq \sqrt{p} \ket{0}\!\bra{1}.
\end{gather}
Interestingly, the AD channel is not distillable, hence its exact capacity is unknown.

Let us consider an arbitrary quantum network composed of AD channels, $\mc{N}_{\text{AD}}$. For communication between two network nodes  ${\bs{x} =  \{ \bs{x}^r, \bs{x}^{u}, \bs{x}^{s}\}}$ and $\bs{y} =  \{ \bs{y}^r, \bs{y}^{u}, \bs{y}^{s}\}$, we can describe each of the internal and external channels as AD channels with a unique damping probability. 
More precisely, let any external channel between two nodes $\bs{x}$ and $\bs{y}$ be AD channels with the damping probability
$p_{\bs{xy}} = p_{\bs{x^s \rightarrow y^r}} = p_{\bs{y^s \rightarrow x^r}}$. The internal channels can also be considered AD channels with fixed physical directions,
\begin{equation}
\mc{E}_{\bs{i}^r \rightarrow \bs{i}^u} = \mc{E}_{p_{\bs{i}}^r },~~\mc{E}_{\bs{i}^u \rightarrow \bs{i}^s} = \mc{E}_{p_{\bs{i}}^s }, ~\bs{i}\in\{\bs{x},\bs{y}\}.
\end{equation} 
For the physical exchange of quantum systems in either direction, the complete compound channels read
\begin{align}
\mc{E}_{\bs{x}\rightarrow \bs{y}}
&= \mc{E}_{p_{\bs{y}}^r}\circ \mc{E}_{{p}_{\bs{xy}}} \circ \mc{E}_{p_{\bs{x}}^s} \label{eq:ForAD},\\
\mc{E}_{\bs{y}\rightarrow \bs{x}} 
&= \mc{E}_{p_{\bs{x}}^r}\circ \mc{E}_{{p}_{\bs{xy}}} \circ \mc{E}_{p_{\bs{y}}^s}.\label{eq:BackAD}
\end{align}

For any quantum network topology, we can use these compound channels and their capacities to assign an optimal physical orientation and regain an undirected graphical representation. To do this, the RCI places an achievable lower-bound on the single-edge capacity of each compound channel throughout the network. For AD channels, this results in the following achievable rate for each network edge \cite{PirPatron09},
\begin{equation}
 I(\mc{E}_{\bs{xy}}^*) =\hspace{-2mm} \max_{\bs{i} \neq \bs{j} \in \{\bs{x},\bs{y}\}} \max_{u} \left[ H_2(u) - H_2(up_{\bs{i}\rightarrow\bs{j}}^{\text{tot}} )\right]. \label{eq:AD_LB}
\end{equation}
Here, $H_2(u) \defeq -u\log_2(u) - (1-u)\log_2(1-u)$ defines the binary Shannon entropy function, and $p_{\bs{i}\rightarrow\bs{j}}^{\text{tot}}$ is the total damping probability associated with a compound channel directed from node $\bs{i}$ to $\bs{j}$,
\begin{align}
p_{\bs{i}\rightarrow\bs{j}}^{\text{tot}} &\defeq 1 - (1-p_{\bs{i\rightarrow j}}^{\text{int}})(1-p_{\bs{ij}}),\\
&= 1 - (1-p_{\bs{j}}^r)(1-p_{\bs{ij}})(1-p_{\bs{i}}^s), 
\end{align}
where we simultaneously define an effective internal loss parameter $p_{\bs{i\rightarrow j}}^{\text{int}} \defeq p_{\bs{i}}^{r} (1-p_{\bs{j}}^s) + p_{\bs{j}}^s$ which captures both sending and receiving inefficiencies. Eq.~(\ref{eq:AD_LB}) can be substituted into the single-path and multi-path expressions in Eqs.~(\ref{eq:I_SPR}) and (\ref{eq:I_MPR}) to provide achievable network rates. 

Upper-bounds on the single-edge capacity of AD channels can also be expanded to provide upper-bounds on the end-to-end capacities. 
One of the tools that can be used for AD channels is the squashed entanglement \cite{ChristandlSqE,TakeokaSqE, AzumaRLT,ToolsNetDesign}. Ref.~\cite{PLOB} showed that,
\begin{equation}
{E}_{\text{sq}}(\mc{E}_{\bs{xy}}^*) = \hspace{-3mm}\max_{\bs{i} \neq \bs{j} \in \{\bs{x},\bs{y}\}} \hspace{-2mm} H_2\bigg(\frac{1}{2}-\frac{p_{\bs{i}\rightarrow\bs{j}}^{\text{tot}}}{4} \bigg) - H_2\bigg( 1 - \frac{p_{\bs{i}\rightarrow\bs{j}}^{\text{tot}}}{4} \bigg).\label{eq:AD_UB}
\end{equation}
This can then be used within Eqs.~(\ref{eq:F_SPR}) and (\ref{eq:F_MPR}) to compute upper-bounds on the ultimate limits of quantum communications over AD networks.

\begin{figure*}
\includegraphics[width=\linewidth]{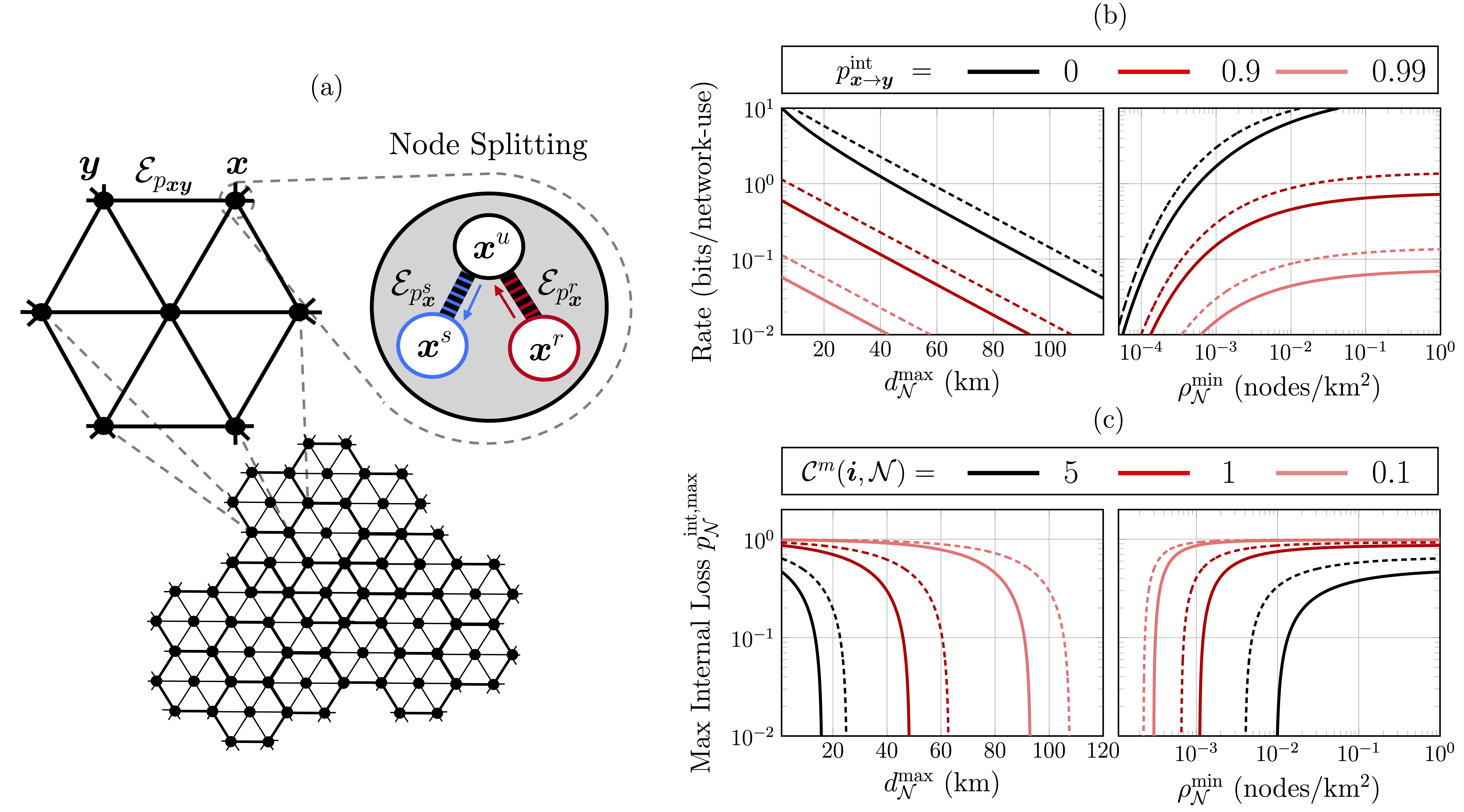}
\caption{(a) Node-splitting procedure for quantum amplitude-damping networks. We consider quantum WRNs within some nodal boundary which satisfy regularity $k=6$ and adjacent commonality $\bs{\lambda}_{\bs{x}} = \{2\}^{\cup 6}$ for any node $\bs{x} \in P$.
Throughout all plots, dashed lines represent bounds obtained using the squashed entanglement as a single-edge capacity upper-bound. Meanwhile, all solid lines plot bounds derived using the RCI as a single-edge capacity lower-bound.
Panel (b) plots bounds on the maximum fibre-length permitted within the network $d_{\mc{N}}^{\max}$ and a corresponding minimum nodal density $\rho_{\mc{N}}^{\min}$ which ensure that an end-user pair can obtain optimal performance $\mc{C}^m(\bs{i},\mc{N})$. Panel (c) depicts bounds on the maximum tolerable internal loss $p_{\mc{N}}^{\max}$ permitted within the network with respect to maximum fibre-length and nodal density so to guarantee an optimal flooding capacity.}
\label{fig:ADWRN}
\end{figure*}

\subsubsection{Example Network Model}

To make use of these bounds and the tools of Section~\ref{sec:WRBench}, we can investigate the end-to-end performance of weakly-regular AD networks. In Fig.~\ref{fig:ADWRN} we consider a $k=6$ weakly-regular network, which adopts equivalent connectivity properties to a triangular lattice. In Fig.~\ref{fig:ADWRN}(a) a single cell of this architecture is illustrated, and it is shown how a $k=8$ WRN can be constructed. We investigate large networks consisting of many of these cells connected together, and consider end-users which are deeply-embedded in the network so that any network boundary effects can be ignored \footnote{This is a very reasonable assumption, and provides a better picture of end-to-end performance within a large network setting. Formal requirements for this has been established in Ref.~\cite{OPGQN}, and we call such networks \textit{internally weakly-regular.}}. 

All edges $(\bs{x},\bs{y})\in E$ in the network are modelled by AD channels of length $d_{\bs{xy}}$, used to describe optical-fibre. As such, the damping-probability of each edge is given by
\begin{equation}
p_{\bs{xy}} = 1 - 10^{-\gamma d_{\bs{xy}}},
\end{equation}
where $\gamma = 0.02 \text{ per km}$ is the state-of-the-art loss-rate for fibre ($0.2$ dB per km). Node-splitting is applied throughout the network in order to incorporate internal loss, as depicted in Fig.~\ref{fig:ADWRN}(a). By specifying a threshold theorem to the property of channel length, we are able to place tight bounds on the \textit{maximum tolerable fibre-length} permitted within the network, $d_{\mc{N}}^{\max}$, with respect to a desired flooding capacity. That is, for a given flooding capacity between a pair of end-users $\mc{C}^m(\bs{i},\mc{N})$ and fixed internal losses $p_{\bs{x}\rightarrow\bs{y}}^{\text{int}}$, we plot the maximum fibre-length that is allowed within the network-bulk so that we can guarantee the flooding capacity is optimal. 

Thanks to the consistent, analytical connectivity properties of WRNs, the maximum link-length can also be used to derive a \textit{minimum nodal density}, $\rho_{\mc{N}}^{\min}$. Given a WRN with consistent connectivity rules and a maximum link-length, the minimum nodal density presents a lower-bound on the number of nodes per unit area when defined over a spatial area. For analytical structures such as WRNs, a relationship between $d_{\mc{N}}^{\max}$ and $\rho_{\mc{N}}^{\min}$ can be identified by determining a least dense configuration of the architecture. Using this relationship, it is possible to extend the link-length threshold theorem to place bounds on the nodal density requirements of a quantum WRN necessary to guarantee high performance. For the $k=6$ WRN considered here it can be shown that
$
\rho_{\mc{N}} \geq \rho_{\mc{N}}^{\min} \geq \frac{2}{\sqrt{3}} (d_{\mc{N}}^{\max})^{-2}.
$ See Appendix~\ref{sec:NodalDens} for more details.

\subsubsection{Analysis}
 
Fig.~\ref{fig:ADWRN}(b) emphasises that end-to-end performance has a clear and obvious dependence on internal decoherence; repeater inefficiencies ultimately limit end-to-end performance, and as a result place stricter constraints on the maximum permitted fibre-length and resources required throughout the network. For lossy network nodes with a total internal efficiency of $p_{\bs{x}\rightarrow\bs{y}}^{\text{int}} = 0.9$ (10\% efficient) at any node require that fibre-lengths are limited to approximately $100\text{ km}$ in order to achieve a flooding capacity of $\mc{C}^m(\bs{i},\mc{N}) = 10^{-2}$ bits per network use. This corresponds to a minimum nodal density of approximately $\rho_{\mc{N}}^{\min} \approx 1\times10^{-4}$ nodes per km${}^2$.

Fig.~\ref{fig:ADWRN}(c) explores this relationship further. Given some end-to-end optimal performance $\mc{C}^{m}(\bs{i},\mc{N})$, we plot the maximum tolerable internal loss $p_{\mc{N}}^{\text{int},\max}$ permitted at each node the network, with respect to the maximum fibre-length (and minimum nodal density) necessary to achieve it. This elucidates the \textit{required} efficiency of repeater stations throughout the network, given some maximum fibre-length and desired end-to-end capacity. We see clearly that for very high-rates $\mc{C}^m(\bs{i},\mc{N}) = 5$, the maximum fibre-length must be limited to approximately $15 - 25 \text{ km}$, otherwise the tolerable internal loss tends to zero, i.e.~it can only be achieved via perfect devices. Similarly, the corresponding minimum nodal density in this setting is very large, requiring on the order of $\sim 10^{-2}$ nodes per km${}^2$. This is the expected resource requirements for high-rate DV quantum communications within a metropolitan setting. For lower rates, each node can tolerate greater inefficiencies over longer channel lengths. Nonetheless, to achieve a flooding rate of $\mc{C}^m(\bs{i},\mc{N}) = 0.1$, channel lengths must be limited to below $93\text{ km}$ in the worst-case or $107\text{ km}$ in the best-case  \footnote{The range of values corresponds to worst-case and best-case bounds on the threshold parameter using the upper or lower-bounds on the end-to-end capacity}.

\subsection{Bosonic Thermal-Loss Networks}

\subsubsection{General Bounds}

 For bosonic quantum communications, one of the most important channels is the Gaussian thermal-loss channel $\mc{E}_{\eta,\bar{n}}$ of transmissivity $\eta \in (0,1)$ and output thermal noise of $\bar{n}$ photons. This channel can be described by the action of a beam-splitter of transmissivity $\eta$ which mixes the input mode with an environmental thermal mode with mean photon number
 \begin{equation}
 \bar{n}_{\text{env}} \defeq \bar{n}/(1-\eta).
 \end{equation}
 This transforms the input quadratures of a single-mode input Gaussian state ${\hat{\bff{x}} = (\hat{q},\hat{p})^T}$ according to $\hat{\bff{x}} \rightarrow \sqrt{\eta}\hat{\bff{x}} + \sqrt{1-\eta}\hat{\bff{x}}_{\text{env}}$, where $\hat{\bff{x}}_{\text{env}}$ is the quadrature operator of the thermal environmental mode. When $\bar{n} = 0$, then the environmental mode is in the vacuum state and this becomes a pure-loss (or lossy) channel $\mc{E}_{\eta}$. CV quantum communication through optical-fibres or free-space are most accurately modelled using thermal-loss channels. Similarly, thermal-loss channels can most effectively model internal noise/losses within realistic CV quantum repeaters.

Let us consider a general thermal-loss quantum network $\mc{N}_{\text{TL}}$. We will assume that two repeater nodes ${\bs{x} =  \{ \bs{x}^r, \bs{x}^{u}, \bs{x}^{s}\}}$ and $\bs{y} =  \{ \bs{y}^r, \bs{y}^{u}, \bs{y}^{s}\}$ are connected via a thermal-loss channel with the attenuation and thermal noise properties
\begin{equation}
\eta_{\bs{xy}} = \eta_{\bs{x^s \rightarrow y^r}} = \eta_{\bs{y^s \rightarrow x^r}},~~ \bar{n}_{\bs{xy}} = \bar{n}_{\bs{x^s \rightarrow  y^r}} = \bar{n}_{\bs{y^s \rightarrow  x^r}}.
\end{equation} 
Furthermore, let us model the internal repeater channels as thermal-loss channels which have fixed physical directions,
\begin{equation}
\mc{E}_{\bs{i}^r \rightarrow \bs{i}^u} = \mc{E}_{{\tau}_{\bs{i}}^r, \bar{n}_{\bs{i}}^r },~~ \mc{E}_{\bs{i}^u \rightarrow \bs{i}^s} = \mc{E}_{{\tau}_{\bs{i}}^s, \bar{n}_{\bs{i}}^s }, ~\bs{i}\in\{\bs{x},\bs{y}\}.
\end{equation} 
Hence, the complete compound channels in either physical direction are
\begin{align}
\mc{E}_{\bs{x}\rightarrow \bs{y}} 
&= \mc{E}_{\tau_{\bs{y}}^r, \bar{n}_{\bs{y}}^r}\circ \mc{E}_{{\eta}_{\bs{xy}},\bar{n}_{\bs{xy}}} \circ \mc{E}_{\bar{n}_{\bs{x}}^s,\tau_{\bs{x}}^s}, \label{eq:ForTL}\\
\mc{E}_{\bs{y}\rightarrow \bs{x}} 
&= \mc{E}_{\tau_{\bs{x}}^r, \bar{n}_{\bs{x}}^r}\circ \mc{E}_{{\eta}_{\bs{xy}},\bar{n}_{\bs{xy}}} \circ \mc{E}_{\bar{n}_{\bs{y}}^s,\tau_{\bs{y}}^s}. \label{eq:BackTL}
\end{align}

In general these compound channels are physically asymmetric. Thus for any quantum network topology, we need assign an optimal physical orientation to regain an undirected graphical representation. Once again, this can be achieved using the RCI to lower-bound the capacity of any compound thermal-loss channel in the network \cite{PirPatron09}. More precisely, for the forward channel we compute,
\begin{align}
I(\mc{E}_{\bs{x}\rightarrow \bs{y}}) 
&= -\log_2(1-\eta_{\bs{x}\rightarrow\bs{y}}^{\text{tot}}) - h\left(\frac{\bar{n}_{{\bs{x}\rightarrow\bs{y}}}^{\text{tot}}}{1-\eta_{\bs{x}\rightarrow\bs{y}}^{\text{tot}}}\right), \label{eq:TL_LB}
\end{align}
where we have made use of the entropic function $
{h(x) \defeq (x+1)\log_2(x+1) - x\log_2(x)},
$
and the total point-to-point transmissivity and thermal noise parameters
\begin{align}
&\eta_{\bs{x}\rightarrow\bs{y}}^{\text{tot}} \defeq {\tau_{\bs{y}}^r} {\tau_{\bs{x}}^s}{\eta}_{\bs{xy}} \label{eq:TotalTran}, \\
&\bar{n}_{\bs{x}\rightarrow\bs{y}}^{\text{tot}} \defeq  \bar{n}_{\bs{y}}^r + \tau_{\bs{y}}^r \bar{n}_{\bs{xy}} + \eta_{\bs{xy}} \tau_{\bs{y}}^r \bar{n}_{\bs{x}}^s. \label{eq:xi_N}
\end{align}
These total parameters are derived by finding a single channel representation of compound thermal-loss channels (see Appendix~\ref{sec:CompDer} for more details). For the backward channel we can simply reverse the order of the nodal directions in Eqs.~(\ref{eq:TotalTran}) and (\ref{eq:xi_N}), retrieving $I(\mc{E}_{\bs{y}\rightarrow \bs{x}})$.
These offer lower-bounds on the capacity of each compound channel. By comparing these quantities, we can then identify the optimal physical channel $\mc{E}_{\bs{xy}}^{*}$ for each edge in the network whose capacity is lower-bounded by,
\begin{equation}
I(\mc{E}_{\bs{xy}}^*) = \max_{\bs{i}\neq\bs{j}\in\{\bs{x},\bs{y}\}} I(\mc{E}_{\bs{i}\rightarrow\bs{j}}).
\end{equation}
This capacity lower-bound can then be substituted into Eq.~(\ref{eq:I_SPR}) and Eq.~(\ref{eq:I_MPR}) for achievable end-to-end rates.

Thermal-loss channels are teleportation-covariant, but are not distillable. Hence, we can compute upper-bounds on the network capacities using the REE single-edge upper-bounds \cite{PLOB}. For either physically directed channel this upper-bound takes the form,
\begin{align}
E_R(\mc{E}_{\bs{i}\rightarrow\bs{j}}) = I(\mc{E}_{\bs{i}\rightarrow\bs{j}}) - \frac{\bar{n}_{{\bs{x}\rightarrow\bs{y}}}^{\text{tot}}}{1-\eta_{\bs{x}\rightarrow\bs{y}}^{\text{tot}}} \log_2 (\eta_{\bs{i}\rightarrow\bs{j}}^{\text{tot}}).
\end{align}
where we have used the RCI from Eq.~(\ref{eq:TL_LB}).
By optimising the physical orientation as before, we can substitute
\begin{align}
E_R({\mc{E}_{\bs{xy}}^*}) = \max_{\bs{i}\neq\bs{j}\in\{\bs{x},\bs{y}\}} E_R(\mc{E}_{\bs{i}\rightarrow\bs{j}}).
\end{align}
into Eq.~(\ref{eq:F_SPR}) and  Eq.~(\ref{eq:F_MPR}) to produce network capacity upper-bounds. Since this is an upper-bound, it is not known if they are achievable rates. However, using $I({\mc{E}_{\bs{xy}}^*})$ and $E_R({\mc{E}_{\bs{xy}}^*})$ it is possible to accurately bound the end-to-end capacities.

\subsubsection{Example Network Model}

\begin{figure*}
\includegraphics[width=\linewidth]{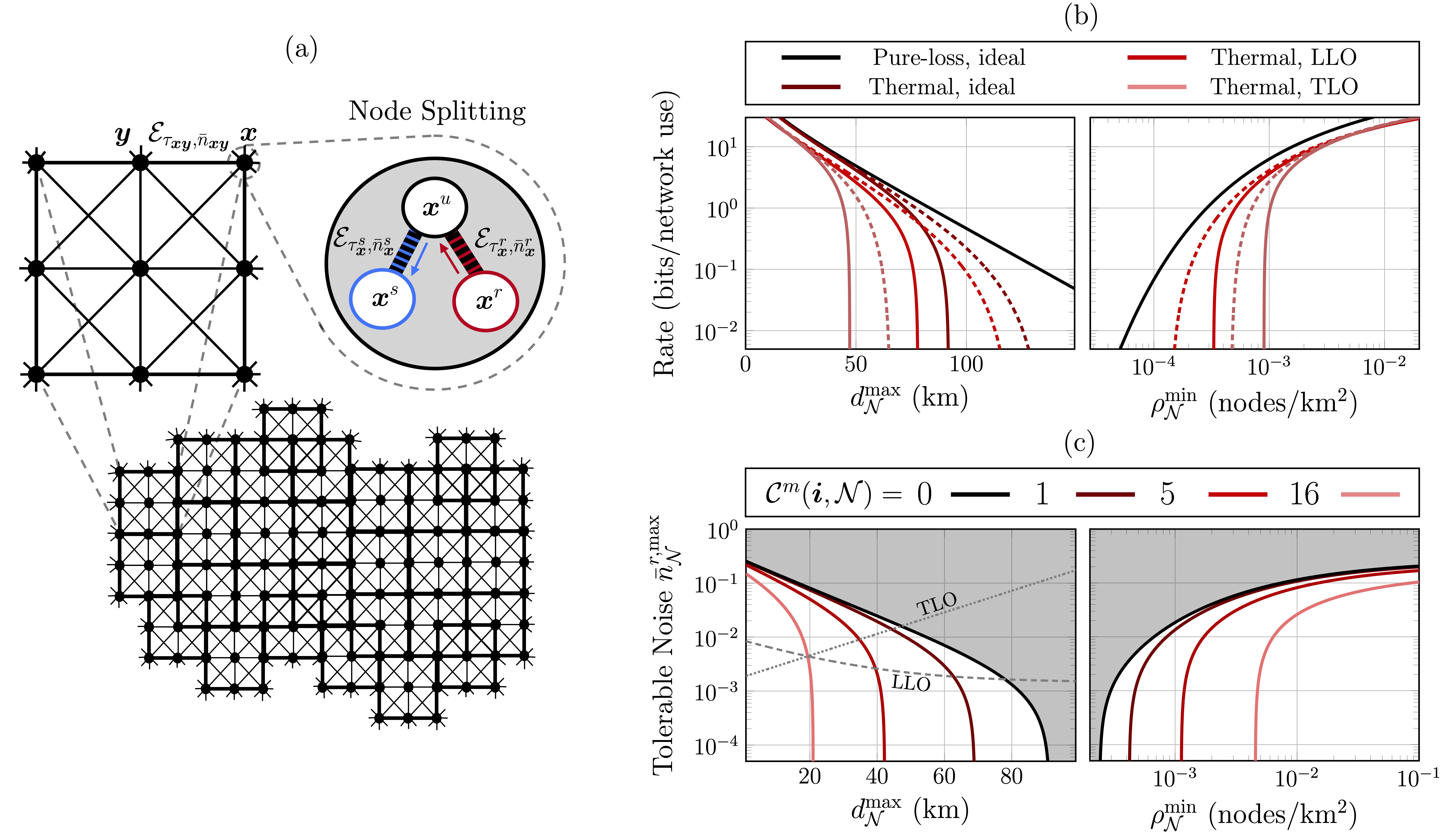}
\caption{(a) Node-splitting procedure for bosonic thermal-loss quantum networks. We consider quantum WRNs within some nodal boundary which satisfy regularity $k = 8$ and adjacent commonality $\bs{\lambda}_{\bs{x}} = \{2,4\}^{\cup 4}$ for any node $\bs{x}\in P$. 
Throughout all plots, dashed lines represent bounds obtained using the REE as a single-edge capacity upper-bound. Meanwhile, all solid lines plot bounds derived using the RCI as a single-edge capacity lower-bound.
Panel (b) plots the maximum inter-nodal separation $d_{\mc{N}}^{\max}$ permitted within the network, and its associated minimum nodal density, such that optimal performance $\mc{C}^m(\bs{i},\mc{N})$ can be obtained. Here, we consider network setups consisting of ideal repeaters or imperfect repeaters using LLOs and TLOs and heterodyne detection. Panel (c) displays the relationship between maximum fibre-length, minimum nodal density and tolerable thermal noise at the receiver $\bar{n}_{\mc{N}}^{r,\max}$ throughout the network. The grey areas of each plot illustrate regions of network parameter space for which we are unable to guarantee \textit{any} end-to-end capacity whatsoever; identifying essential properties for realistic thermal-loss networks.  }
\label{fig:WR_TL}
\end{figure*}

We can now benchmark the end-to-end limits of a bosonic thermal-loss network with imperfect repeaters. Here, we consider a $k=8$ weakly-regular network, inspired by a Manhattan-like structure. A single network cell is depicted in Fig.~\ref{fig:WR_TL}(a) which is used to construct larger designs. Recall that we only demand connectivity constraints, and place no requirements on the spatial or topological properties of the network. The node-splitting procedure is similarly performed, adopting lossy and noisy channels between internal nodes to capture repeater inefficiencies. Once again, we assume that end-users are located within some nodal boundary where boundary effects are unimportant \cite{Note2}.

We assume all network edges $(\bs{x},\bs{y})\in E$ are thermal-loss channels used to model optical-fibre so that the transmissivity is given by ${\eta_{\bs{xy}} = 10^{-{\gamma d_{\bs{xy}}}}}$ using the fibre-loss rate $\gamma = 0.02$ as before. Furthermore, there is unavoidable background thermal noise ${\bar{n}_B}$ which is added to the propagating mode through the fibre-channel for which we assume the typical value of $\bar{n}_B \approx 0.002$ in our numerical investigations. As a result, any external fibre-channel of length $d_{\bs{xy}}$ in a bosonic CV quantum network can be modelled as a thermal-loss channel with these parameters.
As explored with regards to AD networks, we are able to utilise threshold theorems to derive maximum fibre-lengths $d_{\mc{N}}^{\max}$ necessary to guarantee optimal performance bounds. Similarly, this information can be used to identify network nodal density requirements using the minimum nodal density. For the WR architecture considered here, these quantities are connected via $\rho_{\mc{N}}^{\min} \geq 2/(d_{\mc{N}}^{\max})^2$ \cite{OPGQN}.

Thanks to the node-splitting technique, we can also incorporate internal decoherence. In particular, we can investigate the end-to-end performance limits of realistic CV-QKD networks by considering setup noise/loss introduced by specific protocols. Typical CV protocols will make use of either homodyne or heterodyne measurements, both of which rely upon the use of a local-oscillator (or phase reference). This contains information that allows the sender and receiver to exploit both quadratures of the mode. The local-oscillator can be established via two key methods: The transmitted local-oscillator (TLO) or the local local-oscillator (LLO).
A TLO is an additional mode which is co-propagated along with the signal-mode from the sender to receiver, carrying the relevant phase information \cite{RalphCV,AdvCrypt}. Alternatively, the LLO method interleaves signal pulses with bright reference pulses to reconstruct the phase reference locally at the receiver \cite{LLO, LLO2}. Inevitably, both techniques result in setup noise and loss experienced at the receiver. By considering internal noise and loss sources alongside the external contributions from the fibre channel, we can gain insight into the realistic limits of CV-QKD networks which rely upon these techniques (see Appendix \ref{sec:LO_App} for more details).

\subsubsection{Analysis}

In Fig.~\ref{fig:WR_TL}(b) we plot bounds on the maximum fibre-length, and corresponding minimum nodal densities of bosonic thermal-loss networks necessary to guarantee an optimal flooding capacity $\mc{C}^m(\bs{i},\mc{N})$ between an end-user pair $\bs{i}=\{\bs{\alpha},\bs{\beta}\}$. We do this for a number of setups; a pure-loss network with perfect repeaters, a thermal-loss network with perfect repeaters, and thermal-loss networks with imperfect repeaters using either TLOs/LLOs with heterodyne measurements (see Appendix \ref{sec:LO_App} and Table~\ref{table:Setups} for specific details and setup parameters).

For pure-loss networks with ideal repeaters, the maximum inter-nodal separation can become very large as we loosen our demands on the flooding capacity. 
Indeed, in this ideal scenario ${d_{\mc{N}}^{\max} \approx 183\text{ km}}$ for end-to-end rates of $\mc{C}^m(\bs{i},\mc{N}) = 10^{-2}$ bits per network use. However, upon realistic consideration of thermal noise, we realise that this upper-bound is very optimistic. Factoring background noise along each edge, is clear that $d_{\mc{N}}^{\max}$ saturates within some limiting range. For rates on the order of $\mc{C}^m(\bs{i},\mc{N}) = 10^{-2}$ bits per network use, the maximum inter-nodal separation is found in the interval $d_{\mc{N}}^{\max} \in [91, 126] \text{ km}$ \cite{Note3}. This is clearly much stricter than what is predicted when only external loss is considered. 

Consideration of internal imperfections causes $d_{\mc{N}}^{\max}$ to become even stricter. In Fig.~\ref{fig:WR_TL}(b) we study the impact that the use of practical CV-QKD setups (using heterodyne detection) have on network resource requirements. It can be seen that LLO based protocols are superior in maintaining a larger tolerable channel length throughout a network, compared to a TLO approach. While LLOs introduce phase errors, the electronic noise imparted as the receiver is independent from channel transmissivity and thus channel length. As a result, it does not degrade the tolerable channel length, unlike TLO based protocols. Indeed, for end-to-end rates of $\mc{C}^m(\bs{i},\mc{N}) = 10^{-2}$ bits per network use, CV-QKD protocols which utilise LLOs can tolerate at least $\sim 29 \text{ km}$ additional channel length for every point to point link (at most $\sim 48 \text{ km}$ by considering upper-bounds).
 
Stricter link-length demands have a substantial impact on the nodal density requirements of bosonic thermal-loss networks. This is clear within our results; when the maximum fibre-length saturates, so too does the minimum nodal density. Indeed, the consideration of thermal noise within practical CV-QKD networks demands that the minimum nodal density always be at least of order $10^{-4}$ nodes per km${}^2$ in order to achieve any non-zero end-to-end rate. This has significant ramifications on the resource requirements of realistic bosonic quantum networks.

A threshold theorem can also be derived with respect to thermal noise at the receiver $\bar{n}_{\bs{x}}^r$, so that we can further study the interplay between maximum channel length permitted in the WRN and tolerable receiver noise. Fig.~\ref{fig:WR_TL}(c) illustrates this relationship for a number of desired flooding capacities. Here we plot lower-bounds on the maximum tolerable noise $\bar{n}_{\mc{N}}^{r,\max}$ for any node in the network, using the RCI based bound. This reveals a permissible region of network parameters for which not only optimal performance is guaranteed, but non-zero rates are guaranteed. In order to guarantee a non-zero end-to-end capacity (even within this highly-connected architecture) channel lengths should be kept below $91\text{ km}$ at worst, and $126\text{ km}$ at best. Fibre-networks in this configuration which have channels longer than this are not guaranteed to have a non-zero rate. Globally, these channel length constraints manifest in the required nodal density so that we can identify a permissible region of nodal densities to guarantee non-zero rates.

Furthermore, we compare the tolerable noise bounds to the actual noise properties of CV-QKD protocols using TLOs and LLOs. It is once more illustrated that LLOs are much more effective for use within large scale networks, as the associated internal noise scales much more favourably with increasing fibre-lengths.

\section{Conclusion\label{sec:Conc}}

We have presented general bounds for the end-to-end capacities of arbitrary quantum networks. This includes achievable lower-bounds based on the coherent/reverse coherent information which apply to networks composed of any type of quantum channel. We also show how upper-bounds can be obtained using appropriate single-edge capacity bounding functions. 
Employing these bounds in conjunction with a recently developed node-splitting technique, we reveal ways to bound the end-to-end capacity of quantum networks with lossy and noisy repeaters. As a result, we provide versatile tools to investigate the internal and external decoherence properties of realistic quantum networks. 

Making use of these general results, we apply the node-splitting technique to qubit amplitude-damping networks, and bosonic thermal-loss networks; channel models for which the single-edge capacity is not exactly known. Using the class of highly-connected, weakly-regular quantum networks we are able to illuminate critical network properties upon which high-rate quantum communications rely. This allows us to identify internal loss and thermal noise thresholds permitted within quantum repeaters which guarantee optimal end-to-end performance.  

Our results find valuable insight for the infrastructure requirements of future quantum networks. Most prominently, we emphasise the necessity for considering both internal and external thermal noise when designing quantum architectures, as these noise sources can severely restrict the ability to use long fibre-channels while maintaining high-rates. Even when channels are limited in length, quantum repeaters insufficiently protected from noise will compromise performance. Future investigative paths will aim to exploit these bounds to both study and motivate realistic and high-rate network designs, and unveil the resiliency of quantum networks to unavoidable thermal noise. 

\acknowledgements
C.H acknowledges funding from the EPSRC via a Doctoral Training Partnership (EP/R513386/1). S.P acknowledges funding by the European Union via “Continuous Variable Quantum Communications” (CiViQ, Grant Agreement No. 820466).\\


%

\appendix

\section{Network Parameter Benchmarking with Weakly-Regular Networks\label{sec:ThreshTheorems}}

The invaluable mathematical tool within this work is the theory of threshold theorems for WRNs. These are theorems which utilise the connectivity properties of  $(k,\bs{\Lambda})$-WRNs (as introduced and defined in Section~\ref{sec:WRNs}) in order to derive \textit{single-edge threshold conditions} necessary to guarantee strong end-to-end communication performance. These single-edge threshold conditions define upper or lower bounds on physical properties of any single node or channels within the network, e.g.~channel length, thermal noise at a receiver node, internal loss, and more. Such threshold values help us to understand the resilience of quantum networks given some desirable level of performance, helping to benchmark the requirements of realistic quantum technologies necessary to perform at high-rates. 

The derivation these theorems relies upon the analytical description of WRNs and the ability to understand network cut growth in these models. For more background and details we refer the reader to Ref.~\cite{OPGQN} and its Supplementary Material where these concepts were originally derived. In this appendix, we first briefly revise the original threshold theorem for WRNs purely in the context of exact single-edge capacities. We then extend this notion to networks composed of quantum channels whose exact capacities depend upon physical properties, developing a general framework for computing network threshold parameters. Finally, this is extended once more to the situation where exact single-edge capacity expressions are not known but for which we possess bounding functions. 

\subsection{Threshold Theorems with Exact Capacities}

First, we revise threshold theorems with respect to exact single-edge channel capacities. This is the same as answering the following question: Given an end-user pair $\bs{i}=\{\bs{a},\bs{b}\}$, what is the minimum single-edge capacity in a quantum WRN necessary to ensure that the end-to-end flooding capacity is optimally achieved by the min-neighbourhood capacity, $\mc{C}_{\mc{N}_{\bs{i}}}^m$? Originally solved in Ref.~\cite{OPGQN}, we restate the answer in the following.

\begin{theorem}
Consider a $(k,\bs{\Lambda})$-WR quantum network. Select an end-user pair $\bs{i} = \{\bs{a},\bs{b}\}$, and demand they are sufficiently distant such that they do not share an edge. Then there exists a threshold single-edge capacity $\mc{C}_{\min}$ in the network, given by
\begin{gather}
\mc{C}_{\min} \defeq \frac{1}{\delta}\>\mc{C}_{\mc{N}_{\bs{i}}}^m, \label{eq:Thresh}
\end{gather}
where $\delta$ is a characteristic property of the network, 
\begin{equation}
{\delta} \defeq \min_{\bs{\lambda}\in \bs{\Lambda}}{\sum_{\lambda\in\bs{\lambda}} k - \lambda - 1},
\end{equation}
such that if all single-edge capacities in the network satisfy this minimum threshold,
$\mc{C}_{\bs{xy}} \geq \mc{C}_{\min},\forall\>(\bs{x},\bs{y})\in E$
then flooding capacity is guaranteed to satisfy
\begin{equation}
\frac{2(k-1)}{\delta} \mc{C}_{\mc{N}_{\bs{i}}}^{m}\leq \mc{C}^m(\bs{i},\mc{N}) \leq \mc{C}_{\mc{N}_{\bs{i}}}^{m}. \label{eq:PerfBounds}
\end{equation}
\label{theorem:Thresh1}
\end{theorem}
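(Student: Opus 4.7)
The upper bound is an immediate consequence of Eq.~(\ref{eq:FloodUB}): user-node isolation around either $\bs{\alpha}$ or $\bs{\beta}$ is always a valid network cut. For the lower bound, I would partition the valid cuts $C$ (with $\bs{\alpha}\in\bff{A}$, $\bs{\beta}\in\bff{B}$) into two classes. Class (a) contains the user-isolation cuts $\bff{A}=\{\bs{\alpha}\}$ or $\bff{B}=\{\bs{\beta}\}$, for which $\mc{C}^m(C)\geq \mc{C}_{\mc{N}_{\bs{i}}}^m$ holds directly by the definition of the min-neighbourhood capacity. Combined with the inequality $2(k-1)/\delta\leq 1$, which follows for the WRN connective regimes considered from the multi-set structure of $\bs{\Lambda}$, this discharges class (a). The heart of the argument lies in class (b), comprising all remaining cuts with $|\bff{A}|,|\bff{B}|\geq 2$.

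For class (b), I would prove the structural inequality $|\tilde{C}|\geq 2(k-1)$. The elementary $k$-regular accounting $|\tilde{C}|=k|\bff{A}|-2|E(\bff{A})|$, with $E(\bff{A})$ the internal edge set of super-Alice, reduces the question to an upper bound on $|E(\bff{A})|$. The extremal configuration is the ``adjacent-pair'' choice $\bff{A}=\{\bs{\alpha},\bs{y}\}$ with $\bs{y}\in N_{\bs{\alpha}}$, which saturates $|\tilde{C}|=2(k-1)$. Other $|\bff{A}|=2$ configurations place two non-adjacent nodes in $\bff{A}$ and yield the larger value $2k$, while larger $\bff{A}$ configurations shed more additional cut-edges than internal edges they save, once the constraint $\bs{\lambda}_{\bs{x}}\in\bs{\Lambda}$ is accounted for. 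Applying the hypothesis $\mc{C}(\mc{E}_{\bs{xy}}^*)\geq \mc{C}_{\min}$ uniformly over $\tilde{C}$ then yields $\mc{C}^m(C)\geq 2(k-1)\mc{C}_{\min}=\frac{2(k-1)}{\delta}\mc{C}_{\mc{N}_{\bs{i}}}^m$, and taking the minimum over all cuts via Eq.~(\ref{eq:MPR}) closes the sandwich in Eq.~(\ref{eq:PerfBounds}).

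\textbf{Main obstacle.} The principal hurdle is establishing $|\tilde{C}|\geq 2(k-1)$ uniformly across class (b). The $|\bff{A}|=2$ sub-case is transparent counting, but extending it to $|\bff{A}|\geq 3$ requires the weak-regularity constraints encoded by $\bs{\Lambda}$ to rule out pathological dense sub-configurations, such as an embedded clique anchored at an end-user, which would otherwise collapse the inequality. It is precisely the ``sufficiently distant'' hypothesis on $\bs{\alpha},\bs{\beta}$, together with the multi-set parametrisation of $\bs{\Lambda}$, that make the closed-neighbourhood scale $\delta$ the correct quantity controlling every intermediate cut size. I expect the detailed bookkeeping of internal-edge counts against the adjacent-commonality spectrum of $\bs{\Lambda}$ to be the most technically demanding step of the proof.
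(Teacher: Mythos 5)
Your plan is sound and lands on essentially the same skeleton as the paper's argument, which is itself only a sketch that defers the full combinatorics to Ref.~\cite{OPGQN}: the upper bound is immediate from user-node isolation, and the lower bound reduces to showing that no other cut can have multi-edge capacity below $\tfrac{2(k-1)}{\delta}\mc{C}_{\mc{N}_{\bs{i}}}^m$, with the adjacent-pair cut $\bff{A}=\{\bs{a},\bs{y}\}$, $\bs{y}\in N_{\bs{a}}$, as the extremal configuration. The one genuine difference is where the structural weight sits. The paper organises the lower bound around \emph{cut growth with distance from the users}: $\delta$ is introduced as the minimum cut-set cardinality of any cut confined to the network bulk, so that under the threshold $\mc{C}_{\min}=\mc{C}_{\mc{N}_{\bs{i}}}^m/\delta$ every bulk cut automatically has multi-edge capacity at least $\delta\,\mc{C}_{\min}=\mc{C}_{\mc{N}_{\bs{i}}}^m$ and can never undercut user-node isolation; only the near-user cuts then require the separate $2(k-1)$ edge count. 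You instead collapse everything into the single uniform claim $|\tilde{C}|\geq 2(k-1)$ over all cuts with $|\bff{A}|,|\bff{B}|\geq 2$, and use $\delta$ purely as a normalisation in $\mc{C}_{\min}$. That is logically sufficient for the theorem as stated, and arguably cleaner, but it makes one lemma do double duty: for deep bulk cuts the bound weak regularity naturally supplies is $|\tilde{C}|\geq\delta$, not $2(k-1)$, and it is the inequality $\delta\geq 2(k-1)$ --- which you assert ``follows from the multi-set structure of $\bs{\Lambda}$'' but which the paper only guarantees in the weaker form $\delta>k$ --- that glues the two regimes together and keeps the sandwich in Eq.~(\ref{eq:PerfBounds}) from being vacuous; this deserves an explicit verification from the definition of $\delta$. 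The hard combinatorial step you flag (ruling out dense sub-configurations anchored at a user for $|\bff{A}|\geq 3$ via the adjacent-commonality spectrum) is precisely the content the paper delegates to the Supplementary Material of Ref.~\cite{OPGQN}, so neither your plan nor the in-paper sketch actually discharges it.
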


\begin{proof} For a detailed proof see Ref.~\cite[Section IIE, Suppl. Mat]{OPGQN}. Here we sketch the basic idea: The quantity $\delta$ describes how the cardinality of any valid cut-set increases when one tries to perform a network-cut exclusively on the network-bulk. We find that for WRNs of any interest $\delta > k$, and can be computed analytically. Since $\delta$ is known analytically, it can be used to determine a minimum threshold value for single-edge capacities in the network so to ensure that a cut in the network-bulk \textit{always} generates a larger multi-edge capacity than $\mc{C}_{\mc{N}_{\bs{i}}}^m$. This then promises the performance bounds above. 
\end{proof}\\

Theorem~\ref{theorem:Thresh1} is a very useful result, and helps us to derive a global single-edge capacity constraint in order to guarantee the performance bounds in Eq.~(\ref{eq:PerfBounds}). However, if we wish to guarantee exactly optimal performance, we must place an additional constraint on user-connected edges. 

\begin{theorem}
Consider a $(k,\bs{\Lambda})$-WR quantum network. Select an end-user pair $\bs{i} = \{\bs{a},\bs{b}\}$, and demand they are sufficiently distant such that they do not share an edge. Then there exists the threshold single-edge capacity $\mc{C}_{\min}^{\prime}$ in the network bulk, and another for the user-connected edges $\mc{C}_{\min}^{\bs{i}}$ given by
\begin{gather}
\mc{C}_{\min}^{\prime}\defeq \frac{1}{\delta}\>\mc{C}_{\mc{N}_{\bs{i}}}^m, 
~~\mc{C}_{\min}^{\bs{i}}\defeq \frac{1}{\omega}\>\mc{C}_{\mc{N}_{\bs{i}}}^m,\label{eq:NeighThreshs}
\end{gather}
where we define the network parameter
\begin{equation}
\omega \defeq \frac{\delta(k-1)}{\delta - k + 1}
\end{equation}
If all single-edge capacities in the network satisfy their minimum thresholds, $\mc{C}_{\bs{xy}} \geq \mc{C}_{\min}^{\prime},\forall\>(\bs{x},\bs{y})\in E^{\prime}$ and $\mc{C}_{\bs{xy}} \geq \mc{C}_{\min}^{\bs{i}},\forall\>(\bs{x},\bs{y})\in E_{\bs{a}}\cup E_{\bs{b}}$ then flooding capacity is guaranteed to satisfy
\begin{equation}
\mc{C}^m(\bs{i},\mc{N}) = \mc{C}_{\mc{N}_{\bs{i}}}^{m}.
\end{equation}
\label{theorem:Thresh2}
\end{theorem}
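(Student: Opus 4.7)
The plan is to establish $\mc{C}^m(\bs{i},\mc{N}) \geq \mc{C}_{\mc{N}_{\bs{i}}}^{m}$, since the matching upper bound is immediate from Eq.~(\ref{eq:FloodUB}) (user-node isolation is always a valid cut). Let $\bs{a}\in\bs{i}$ denote the end-user realising the minimum in the definition of $\mc{C}_{\mc{N}_{\bs{i}}}^m$, and for any cut $C$ decompose super-Alice as $\bff{A} = \{\bs{a}\}\cup U \cup W$, where $U \defeq \bff{A}\cap N_{\bs{a}}$ collects the user's neighbours absorbed into $\bff{A}$ and $W \defeq \bff{A}\setminus(\{\bs{a}\}\cup N_{\bs{a}})$ collects the remaining bulk component. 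The proof reduces to showing $\mc{C}^m(C) \geq \mc{C}_{\mc{N}_{\bs{i}}}^m$ for every such decomposition.

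When $U=\emptyset$, every user-edge $E_{\bs{a}}$ crosses the cut regardless of $W$, so $\mc{C}^m(C) \geq \mc{C}_{\mc{N}_{\bs{i}}}^m$ holds trivially. The nontrivial regime is $|U|=m\geq 1$, where $m$ user-edges no longer contribute to $\tilde{C}$ and must be compensated by edges from $U$ to super-Bob (plus any additional bulk contributions due to $W$). Theorem~\ref{theorem:Thresh1} and the bulk threshold $\mc{C}_{\min}^{\prime}$ bound these replacements from below, while the new user-edge threshold $\mc{C}_{\min}^{\bs{i}}$ bounds the removed capacities from above.

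The core calculation is the single-neighbour case $m=1$, which I anticipate to be the extremal configuration. Absorbing one $\bs{y}_1\in N_{\bs{a}}$ into $\bff{A}$ deletes the user-edge $(\bs{a},\bs{y}_1)$ from $\tilde{C}$, whose capacity is at most $\mc{C}_{\mc{N}_{\bs{i}}}^m - (k-1)\mc{C}_{\min}^{\bs{i}}$ since the remaining $k-1$ user-edges each contribute at least $\mc{C}_{\min}^{\bs{i}}$ to the fixed sum $\mc{C}_{\mc{N}_{\bs{i}}}^m$, and introduces $k-1$ new edges from $\bs{y}_1$ each of capacity at least $\mc{C}_{\min}^{\prime} = \mc{C}_{\mc{N}_{\bs{i}}}^m/\delta$. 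The requirement $\mc{C}^m(C) \geq \mc{C}_{\mc{N}_{\bs{i}}}^m$ then collapses to the linear inequality $\frac{k-1}{\delta} + \frac{k-1}{\omega} \geq 1$, which is saturated precisely by $\omega = \frac{(k-1)\delta}{\delta-k+1}$.

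The hard part will be verifying that $m=1$ is indeed extremal and extending the bound to $m\geq 2$ in the presence of a nonempty $W$. For $m\geq 2$ the number of new edges crossing $\tilde{C}$ from $U$ is modulated by the internal edge count $|E(U)|$ within the user-neighbourhood, which is itself constrained by the adjacent-commonality multi-sets $\bs{\lambda}_{\bs{a}}\in\bs{\Lambda}$: every edge inside $U\subseteq N_{\bs{a}}$ identifies a neighbour pair admitting $\bs{a}$ as a common neighbour, so $2|E(U)| \leq \sum_{\bs{y}\in U}\lambda_{\bs{a}}^{\bs{y}}$. I would close the argument by combining this structural bound with the two-threshold linear inequality above, and by observing that any further bulk extension through $W\neq\emptyset$ can only contribute additional non-negative terms to $\mc{C}^m(C)$ that are controlled by the cut-growth bound of Theorem~\ref{theorem:Thresh1}.
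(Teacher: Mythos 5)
The paper does not actually prove this theorem: it defers entirely to the Supplementary Material of Ref.~\cite{OPGQN} and records only the intuition that the stricter user-edge threshold ``counteracts a worst-case scenario''. Your reconstruction correctly identifies that worst case. For the cut that absorbs a single neighbour $\bs{y}_1$ of $\bs{a}$, the surviving $k-1$ user edges and the $k-1$ outgoing edges of $\bs{y}_1$ give $\mc{C}^m(C)\geq (k-1)\mc{C}_{\min}^{\bs{i}}+(k-1)\mc{C}_{\min}^{\prime}$, and the definition of $\omega$ is precisely what makes $(k-1)/\omega+(k-1)/\delta=1$; this also explains the $2(k-1)/\delta$ factor in Theorem~\ref{theorem:Thresh1}. (You do not need the detour through an upper bound on the deleted edge's capacity, which is anyway slightly off: $\sum_{(\bs{a},\bs{y})\in E_{\bs{a}}}\mc{C}_{\bs{a}\bs{y}}$ need not equal $\mc{C}_{\mc{N}_{\bs{i}}}^m$ when $\bs{a}$ is not the minimising user. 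Lower-bounding the surviving edges directly is cleaner and yields the same inequality.)

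The gap is the part you flag yourself, and the route you propose for closing it does not suffice as stated. Take the $k=6$, $\bs{\lambda}_{\bs{x}}=\{2\}^{\cup 6}$ architecture of Fig.~\ref{fig:ADWRN}, so $\delta=18$ and $\omega=90/13$, and consider $U\subset N_{\bs{a}}$ with $|U|=3$ and $W=\emptyset$. Your structural bound $2|E(U)|\leq\sum_{\bs{y}\in U}\lambda_{\bs{a}}^{\bs{y}}=6$ permits $U$ to be a triangle, in which case each $\bs{y}\in U$ sends only $k-1-2=3$ edges across the cut and the guaranteed multi-edge capacity is $3\,\mc{C}_{\min}^{\bs{i}}+9\,\mc{C}_{\min}^{\prime}=\left(\tfrac{13}{30}+\tfrac{1}{2}\right)\mc{C}_{\mc{N}_{\bs{i}}}^{m}<\mc{C}_{\mc{N}_{\bs{i}}}^{m}$. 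This configuration is excluded only because the actual neighbourhood graph is a triangle-free $6$-cycle --- structural information that the pair $(k,\bs{\Lambda})$ together with your commonality inequality does not carry --- so extremality of $m=1$ cannot be established from the ingredients you list. Likewise, cuts with $W\neq\emptyset$ (including cuts that hug $\bs{b}$ rather than $\bs{a}$, and genuine bulk cuts touching neither user neighbourhood) are dispatched by an appeal to the ``cut-growth bound of Theorem~\ref{theorem:Thresh1}'', but that bound is itself only sketched in this paper; making it precise and combining it with the mixed user/bulk accounting is the substance of the argument in Ref.~\cite{OPGQN} and is exactly what your proposal leaves open.
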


\begin{proof} Once again, we point the reader towards a detailed proof in Ref.~\cite[Section IIE, Suppl. Mat]{OPGQN}. The slightly stricter condition for user-connected edges is enforced to counteract a worst-case scenario which gives rise to the lower-bound in Eq.~(\ref{eq:PerfBounds}). Nonetheless, this additional condition is not too invasive as it only applies to the subset of of $2k$ edges connected to either end-user. By imposing this condition, $\mc{C}^m(\bs{i},\mc{N}) = \mc{C}_{\mc{N}_{\bs{i}}}^m$ can be completely guaranteed.
\end{proof}

\subsection{Threshold Theorems for Network Parameters}

The theorems in the previous section provide valuable tools for understanding single-edge capacity requirements for WRNs so that performance bounds or optimal performance can be guaranteed. The threshold capacities derived are useful, but it is even more useful to identify a relationship between end-to-end network performance and \textit{physical properties} of the network nodes and channels. Hence, our goal is to translate these abstract threshold theorems into tangible relationships between physical channel parameters and end-to-end performance.

\begin{corollary}
Consider a $(k,\bs{\Lambda})$-WR quantum network $\mc{N} = (P,E)$, an end-user pair $\bs{i} = \{\bs{\alpha},\bs{\beta}\} $ and a desired min-neighbourhood capacity $\mc{C}_{\mc{N}_{\bs{i}}}^m$. Consider a single-edge channel property $\xi_{\bs{xy}}$ for which the point-to-point capacity $\mc{C}(\xi_{\bs{xy}})$ is monotonic. Then, if $\mc{C}_{\mc{N}_{\bs{i}}}^m$ is attainable, there exists a threshold parameter $\xi_{\mc{N}}^{*}$ such that
\begin{equation}
\xi_{\mc{N}}^*  \defeq \argmin_{\xi} \Big| \mc{C}(\xi) - \frac{ \mc{C}_{\mc{N}_{\bs{i}}}^m}{\delta} \Big|,
\end{equation}
which represents a maximum or minimum tolerable value of $\xi_{\bs{xy}}$ for any channel in the network:
\begin{equation}
\xi_{\mc{N}}^* \hspace{-1mm} =
\begin{cases}
 \xi_{\mc{N}}^{\max},  & \text{\emph{$\mc{C}(\xi)$ is decreasing}},\\
\xi_{\mc{N}}^{\min},  & \text{\emph{$\mc{C}(\xi)$ is increasing}}.
\end{cases}
\end{equation}
If $\xi_{\mc{N}}^*$ is obeyed for all $(\bs{x},\bs{y})\in E$ then the flooding capacity is guaranteed to satisfy
\begin{equation}
\frac{2(k-1)}{\delta}\mc{C}_{\mc{N}_{\bs{i}}}^{m} \leq \mc{C}^m(\bs{i},\mc{N}) \leq \mc{C}_{\mc{N}_{\bs{i}}}^{m}. 
\end{equation}
\label{corollary:Thresh1_Xi}
\end{corollary}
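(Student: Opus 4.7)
The plan is to reduce Corollary~\ref{corollary:Thresh1_Xi} to Theorem~\ref{theorem:Thresh1} by translating the capacity-level threshold into a parameter-level threshold using the monotonicity hypothesis. First I would invoke Theorem~\ref{theorem:Thresh1} to obtain the single-edge capacity threshold $\mc{C}_{\min} = \mc{C}_{\mc{N}_{\bs{i}}}^m/\delta$, which is exactly the quantity appearing inside the absolute value in the definition of $\xi_{\mc{N}}^*$. The hypothesis that $\mc{C}_{\mc{N}_{\bs{i}}}^m$ is attainable ensures that $\mc{C}_{\min}$ lies in the range of $\mc{C}(\cdot)$, so that the argmin is a genuine preimage, $\mc{C}(\xi_{\mc{N}}^*) = \mc{C}_{\min}$, rather than merely a closest-point approximation.

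Next I would use monotonicity to convert a parameter-level constraint into the capacity-level constraint demanded by Theorem~\ref{theorem:Thresh1}. If $\mc{C}(\xi)$ is strictly decreasing in $\xi$, then $\xi_{\bs{xy}} \leq \xi_{\mc{N}}^{\max}$ is equivalent to $\mc{C}(\mc{E}_{\bs{xy}}^{*}) \geq \mc{C}(\xi_{\mc{N}}^{\max}) = \mc{C}_{\min}$; if instead $\mc{C}(\xi)$ is strictly increasing, $\xi_{\bs{xy}} \geq \xi_{\mc{N}}^{\min}$ is likewise equivalent to $\mc{C}(\mc{E}_{\bs{xy}}^{*}) \geq \mc{C}_{\min}$. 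Either case handles the two branches of the piecewise definition of $\xi_{\mc{N}}^{*}$ uniformly. Imposing $\xi_{\mc{N}}^{*}$ on every $(\bs{x},\bs{y}) \in E$ therefore guarantees $\mc{C}(\mc{E}_{\bs{xy}}^{*}) \geq \mc{C}_{\min}$ throughout the network, which is exactly the hypothesis of Theorem~\ref{theorem:Thresh1}. Applying that theorem then delivers the claimed sandwich
\begin{equation*}
\frac{2(k-1)}{\delta}\,\mc{C}_{\mc{N}_{\bs{i}}}^{m} \leq \mc{C}^m(\bs{i},\mc{N}) \leq \mc{C}_{\mc{N}_{\bs{i}}}^{m}.
\end{equation*}

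The only subtle point, and the place I would be most careful, is handling the argmin formulation cleanly when $\mc{C}$ has a bounded range or is defined only on a discrete parameter grid. The attainability hypothesis on $\mc{C}_{\mc{N}_{\bs{i}}}^m$ rules out the pathological case where $\mc{C}_{\min}$ lies outside the image of $\mc{C}(\cdot)$. In the physical applications of interest (channel length $d$, transmissivity $\eta$, thermal noise $\bar{n}$), the single-edge capacity bounds computed in Section~\ref{sec:GenBounds} are continuous and strictly monotonic over the relevant parameter range, so the preimage is unique and coincides with the argmin. No additional cut-growth machinery is needed beyond what Theorem~\ref{theorem:Thresh1} already provides; the corollary is essentially a change of variables from a capacity constraint to a physical parameter constraint.
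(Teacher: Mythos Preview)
Your proposal is correct and follows essentially the same route as the paper: invoke Theorem~\ref{theorem:Thresh1} to obtain $\mc{C}_{\min}=\mc{C}_{\mc{N}_{\bs{i}}}^m/\delta$, use attainability to guarantee that $\mc{C}_{\min}$ lies in the range of $\mc{C}(\cdot)$ so that $\mc{C}(\xi_{\mc{N}}^*)=\mc{C}_{\min}$, and then use monotonicity to identify $\xi_{\mc{N}}^*$ as a maximum or minimum and pull back the capacity constraint to a parameter constraint. The paper phrases the max/min dichotomy via an $\varepsilon$-perturbation argument rather than your direct equivalence, but the content is identical.
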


\begin{proof} Via Theorem~\ref{theorem:Thresh1}, we know that there exists a threshold capacity $\mc{C}_{\min} = \mc{C}_{\mc{N}_{\bs{i}}}^m/\delta$ which when respected throughout the network ensures that the performance bounds in Eq.~(\ref{eq:PerfBounds}) hold. 
Now, consider a physical property of a single network edge $(\bs{x},\bs{y})\in E$ denoted by $\xi \rightarrow \xi_{\bs{xy}}$, e.g.~channel length. Suppose that the single-edge capacity is a monotonic function of the single-edge property $\xi_{\bs{xy}}$. Then the threshold capacity $\mc{C}_{\min}$ can be translated into a threshold condition on $\xi_{\bs{xy}}$,
since we can write
\begin{equation}
\mc{C}_{\bs{xy}} = \mc{C}(\xi_{\bs{xy}}) \geq \mc{C}_{\min}.
\end{equation}
Therefore, there must exist a critical threshold parameter $ \xi = \xi_{\mc{N}}^*$ for which the single-edge threshold capacity is exactly satisfied,
\begin{equation}
\mc{C}(\xi_{\mc{N}}^*) = \mc{C}_{\min}. \label{eq:ThreshEq}
\end{equation}
The quantity $\xi_{\mc{N}}^*$ thus represents some limiting feature of each network edge necessary to uphold the optimal performance bounds. 

While we may not know exactly what form the single-edge capacity function takes, we can still determine the threshold value $\xi_{\mc{N}}^*$ as the value of $\xi$ which satisfies Eq.~(\ref{eq:ThreshEq}). 
Yet, we must be slightly careful here, and so let us define our capacity function more formally. The single-edge capacity $\mc{C}(\xi)$ is a function which maps single-edge network parameters $\xi$ from a domain $\mc{X}$ of possible values $\xi \in \mc{X}$, to a codomain $\mc{Y}$ of potential values $\mc{C}(\xi) \in \mc{Y}$. This codomain is necessarily a subset of the set of non-negative real numbers $\mathbb{R}_0^+$ so that the outputs of $\mc{C}$ represent meaningful capacity values. More precisely,
\begin{equation}
\mc{C} : \mc{X} \rightarrow \mc{Y} \subseteq \mathbb{R}_0^+. 
\end{equation}
We are only interested in capacity functions $\mc{C}$ which are monotonic with respect to $\xi$ so this is necessarily a one-to-one correspondence.

We may then state the following: Consider a desired min-neighbourhood capacity $\mc{C}_{\mc{N}_{\bs{i}}}^m$ and a single-edge capacity function $\mc{C}(\xi)$ which is monotonic with respect to the network parameter $\xi$. A  threshold parameter value $\xi_{\mc{N}}^{*}$ will exist if and only if $\mc{C}_{\mc{N}_{\bs{i}}}^m/\delta$ falls within the codomain $\mc{Y}$. Otherwise, there will not exist a physical value of $\xi$ for which $\mc{C}_{\mc{N}_{\bs{i}}}^m$ is attainable. 
With these considerations in mind, we can state that a network threshold parameter $\xi_{\mc{N}}^*$ can be appropriately computed such that
\begin{align}
\xi_{\mc{N}}^*  = \argmin_{\xi} \Big| \mc{C}(\xi) - \frac{ \mc{C}_{\mc{N}_{\bs{i}}}^m}{\delta} \Big| \iff \frac{\mc{C}_{\mc{N}_{\bs{i}}}^{m}}{\delta} \in \mc{Y}.
\end{align}

If $\mc{C}(\xi_{\bs{xy}})$ is a monotonically decreasing function, then $\xi_{\mc{N}}^*$ must be a \textit{maximum threshold value} $\xi_{\mc{N}}^{\max}$ because increasing it further would decrease the capacity below the threshold, i.e.~$\mc{C}(\xi_{\mc{N}}^{*} + \varepsilon) < \mc{C}_{\min}$, where $\epsilon > 0$. If $\mc{C}(\xi_{\bs{xy}})$ is a monotonically increasing function, the opposite is true and $\xi_{\mc{N}}^*$ must be a \textit{minimum threshold value} $\xi_{\mc{N}}^{\min}$ because decreasing it further would reduce the capacity below the threshold, i.e.~$\mc{C}(\xi_{\mc{N}}^{*} - \varepsilon) < \mc{C}_{\min}$ where $\epsilon > 0$. This completes the result.
\end{proof}\\

Analogously, we can translate Theorem~\ref{theorem:Thresh2} with respect to a tangible, threshold network parameter.

\begin{corollary}
Consider a $(k,\bs{\Lambda})$-WR quantum network $\mc{N} = (P,E)$, an end-user pair $\bs{i} = \{\bs{\alpha},\bs{\beta}\} $ and a desired min-neighbourhood capacity $\mc{C}_{\mc{N}_{\bs{i}}}^m$. Consider a single-edge channel property $\xi_{\bs{xy}}$ for which the point-to-point capacity $\mc{C}(\xi_{\bs{xy}})$ is monotonic. Then, if $\mc{C}_{\mc{N}_{\bs{i}}}^m$ is attainable, there exist threshold parameters $\xi_{\mc{N}}^{*}$ and $\xi_{\mc{N}_{\bs{i}}}^{*}$ which represent maximum or minimum tolerable values of $\xi_{\bs{xy}}$ for edge in the network-bulk or user-connected edge respectively:
\begin{equation}
(\xi_{\mc{N}}^*, \xi_{\mc{N}_{\bs{i}}}^*) \hspace{-1mm} \defeq 
\begin{cases}
 (\xi_{\mc{N}}^{\max}, \xi_{\mc{N}_{\bs{i}}}^{\max})  & \text{\emph{$\mc{C}(\xi_{\bs{xy}})$ is decreasing}},\\
(\xi_{\mc{N}}^{\min}, \xi_{\mc{N}_{\bs{i}}}^{\min})  & \text{\emph{$\mc{C}(\xi_{\bs{xy}})$ is increasing}}.
\end{cases}
\end{equation}
If $\xi_{\mc{N}}^*$ and $\xi_{\mc{N}_{\bs{i}}}^*$ are obeyed in their respective sub-networks, then the flooding capacity is guaranteed to satisfy
\begin{equation}
\mc{C}^m(\bs{i},\mc{N}) = \mc{C}_{\mc{N}_{\bs{i}}}^{m}. 
\end{equation}
\label{corollary:Thresh2_Xi}
\end{corollary}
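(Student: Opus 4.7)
The plan is to reduce the statement to Theorem~\ref{theorem:Thresh2} and then replicate the monotonic-inversion argument used in Corollary~\ref{corollary:Thresh1_Xi}, applied separately to the two sub-networks that Theorem~\ref{theorem:Thresh2} distinguishes: the network-bulk edges $E'$ with threshold $\mc{C}_{\min}^{\prime} = \mc{C}_{\mc{N}_{\bs{i}}}^m/\delta$, and the user-connected edges $E_{\bs{\alpha}} \cup E_{\bs{\beta}}$ with threshold $\mc{C}_{\min}^{\bs{i}} = \mc{C}_{\mc{N}_{\bs{i}}}^m/\omega$. No new dynamics enter; the only task is to translate each of these two capacity thresholds into a corresponding threshold on the physical parameter $\xi$.

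First I would use monotonicity of $\mc{C}(\xi)$ on its domain $\mc{X}$ (with codomain $\mc{Y} \subseteq \mathbb{R}_0^+$) to invert each capacity inequality into a one-sided inequality on $\xi_{\bs{xy}}$. Setting
\begin{align}
\xi_{\mc{N}}^* &\defeq \argmin_{\xi} \left| \mc{C}(\xi) - \mc{C}_{\mc{N}_{\bs{i}}}^m/\delta \right|,\\
\xi_{\mc{N}_{\bs{i}}}^* &\defeq \argmin_{\xi} \left| \mc{C}(\xi) - \mc{C}_{\mc{N}_{\bs{i}}}^m/\omega \right|,
\end{align}
picks out the crossover values where the two capacity thresholds are met with equality. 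A small case split then determines their character: a decreasing $\mc{C}(\xi)$ means that any $\xi > \xi^*$ drops the capacity below the threshold, so both $\xi_{\mc{N}}^*$ and $\xi_{\mc{N}_{\bs{i}}}^*$ must be read as maxima; an increasing $\mc{C}(\xi)$ flips the sign of the perturbation and yields minima. This mirrors the case split already carried out in Corollary~\ref{corollary:Thresh1_Xi}.

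Once enforced on the correct sub-network, the constraint $\xi_{\bs{xy}} \leq \xi_{\mc{N}}^{\max}$ (or $\geq \xi_{\mc{N}}^{\min}$) for every $(\bs{x},\bs{y})\in E'$ translates via monotonicity into $\mc{C}_{\bs{xy}} \geq \mc{C}_{\min}^{\prime}$, and likewise $\xi_{\bs{xy}} \leq \xi_{\mc{N}_{\bs{i}}}^{\max}$ (or $\geq \xi_{\mc{N}_{\bs{i}}}^{\min}$) on $E_{\bs{\alpha}} \cup E_{\bs{\beta}}$ gives $\mc{C}_{\bs{xy}} \geq \mc{C}_{\min}^{\bs{i}}$. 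This is precisely the joint hypothesis of Theorem~\ref{theorem:Thresh2}, whose conclusion $\mc{C}^m(\bs{i},\mc{N}) = \mc{C}_{\mc{N}_{\bs{i}}}^{m}$ then follows immediately.

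The main obstacle, as in Corollary~\ref{corollary:Thresh1_Xi}, is the attainability condition; but here it splits into \emph{two} requirements: both $\mc{C}_{\mc{N}_{\bs{i}}}^m/\delta$ and $\mc{C}_{\mc{N}_{\bs{i}}}^m/\omega$ must lie in the codomain $\mc{Y}$ for $\xi_{\mc{N}}^*$ and $\xi_{\mc{N}_{\bs{i}}}^*$ to be simultaneously realisable. Since $\omega = \delta(k-1)/(\delta - k + 1)$ differs from $\delta$ in general, these are genuinely independent constraints on the physical range of $\xi$. The cleanest phrasing is to read ``$\mc{C}_{\mc{N}_{\bs{i}}}^m$ is attainable'' as the statement that the larger of the two target values lies in $\mc{Y}$; if either falls outside $\mc{Y}$, the corresponding threshold is unachievable by any physical $\xi$ and the claim holds vacuously, exactly as in the single-threshold case.
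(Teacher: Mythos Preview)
Your proposal is correct and follows essentially the same approach as the paper: reduce to Theorem~\ref{theorem:Thresh2}, then reuse the monotonic-inversion argument of Corollary~\ref{corollary:Thresh1_Xi} twice, once for the bulk threshold $\mc{C}_{\mc{N}_{\bs{i}}}^m/\delta$ and once for the user-connected threshold $\mc{C}_{\mc{N}_{\bs{i}}}^m/\omega$. Your treatment is in fact more explicit than the paper's own proof, which simply states that the logic is identical to the previous corollary with two threshold equations to solve; your added remark that attainability now splits into two codomain conditions is a reasonable elaboration not spelled out in the original.
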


\begin{proof} This corollary is simply a translation of Theorem~\ref{theorem:Thresh2} with respect to a single-edge network property $\xi_{\bs{xy}}$ for which the capacity function is monotonic. It follows an identical logic to the previous corollary, except now we must identify two different threshold values by solving the equations,
\begin{align}
&\mc{C}(\xi_{\mc{N}}^*) = \mc{C}_{\min}^{\prime} = \frac{\mc{C}_{\mc{N}_{\bs{i}}}^m}{\delta},\\ 
&\mc{C}(\xi_{\mc{N}_{\bs{i}}}^*) = \mc{C}_{\min}^{\bs{i}} = \frac{\mc{C}_{\mc{N}_{\bs{i}}}^m}{\omega}.
\end{align}
The maximum or minimum threshold behaviour and argument minimisation follow identically as before. 
\end{proof}\\

Corollaries~\ref{corollary:Thresh1_Xi} and \ref{corollary:Thresh2_Xi} therefore identify maximum or minimum threshold parameters, which when respected throughout the network are able to guarantee tight performance bounds or optimal performance. An interesting threshold parameter might be the maximum link-length, the maximum thermal noise that can be tolerated at a receiver, etc. These theorems reveal extremely useful relationships between the end-to-end rate and physical properties of interest; providing invaluable guidance for future quantum network design. 

\subsection{Threshold Theorems with Capacity Bounds}

Up to this point we have been deriving exact threshold quantities via the assumption that an expressions for the point-to-point capacity $\mc{C}(\xi)$ with respect to some physical property $\xi$ is known. Yet, when the exact nature of a quantum channel capacity is not known it is necessary to make use of capacity bounds, as have been explored in this paper. Indeed, consider a single-edge channel property $\xi$ for which the point-to-point capacity $\mc{C}(\xi)$ is monotonically bounded by lower and upper bounding functions $F \in \{ F_l, F_u\}$ respectively. We introduce the following ``cost function" to evaluate the difference between a desired min-neighbourhood capacity $\mc{C}_{\mc{N}_{\bs{i}}}^m$ and a capacity bound,
\begin{equation}
\mc{B}_{F_j}(\xi, x) \defeq \left| F_j({\xi}) - \frac{\mc{C}_{\mc{N}_{\bs{i}}}^m}{x} \right|.
\end{equation}
Here $x$ is a free parameter used to scale the capacity bound in accordance with the network connectivity properties. The goal of constructing this cost function is to ensure that $\mc{B}_{F_j}(\xi, x)$ is minimised (tends to zero) when $\xi \rightarrow \xi_{\mc{N}}^{*}$. Clearly, we may define this function in many different ways provided that this behaviour is maintained. In our numerical studies, we choose to minimise the log-ratio instead,
\begin{equation}
\mc{B}_{F_j}(\xi,x) =| \log( x F_{j}(\xi)) - \log(\mc{C}_{\mc{N}_{\bs{i}}}^{m})|.
 \end{equation}

Given a cost function, the goal is to determine bounds on the threshold value $\xi_{\mc{N}}^*$ which helps to guarantee optimal performance bounds. Hence, we define the quantity,
\begin{equation}
{{\xi}_{F_{j}}^{*}(x)} \defeq \argmin_{\xi} {\mc{B}_{F_{j}}(\xi, x)}
\end{equation}
where $j \in \{ l, u \}$, which presents a lower or upper an upper or lower bound on $\xi_{\mc{N}}^*$. 

Thanks to single-edge capacity bounds, we can generalise Corollaries~\ref{corollary:Thresh1_Xi} and \ref{corollary:Thresh2_Xi} to derive conditions for end-to-end performance on networks consisting of quantum channels whose exact capacities are not known. This results in the following corollary.

\begin{corollary}
Consider Corollary~\ref{corollary:Thresh1_Xi} and a single-edge channel property $\xi_{\bs{xy}}$ for which the point-to-point capacity $\mc{C}(\xi_{\bs{xy}})$ is monotonically bounded by lower and upper bounding functions $F \in \{ F_l, F_u\}$ respectively. The threshold parameter $\xi_{\mc{N}}^{*}$ satisfies
\begin{equation}
{\xi_{F_{j}}^{*}(\delta)} \leq \xi_{\mc{N}}^{*} \leq  {\xi_{F_{k}}^{*}(\delta)} 
\end{equation}
where $j\neq k \in \{l, u\}$ such that 
\begin{equation}
\begin{cases}
{\xi_{F_u}^{\max}(\delta) \leq \xi_{\mc{N}}^{\max} \leq \xi_{F_l}^{\max}(\delta)},  & \text{\emph{$\mc{C}(\xi)$ is decreasing}},\\
{\xi_{F_l}^{\min}(\delta) \leq \xi_{\mc{N}}^{\min} \leq \xi_{F_u}^{\min}(\delta)},  & \text{\emph{$\mc{C}(\xi)$ is increasing}}.
\end{cases}
\label{eq:BoundOrder}
\end{equation}
\label{corollary:Thresh1_Bound}
\end{corollary}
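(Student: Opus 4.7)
The plan is to reduce to Corollary~\ref{corollary:Thresh1_Xi} and then sandwich the unknown $\xi_{\mc{N}}^{*}$ using the two bounding functions. Corollary~\ref{corollary:Thresh1_Xi} characterises the true threshold as the (monotone) inverse of the exact capacity evaluated at $\mc{C}_{\mc{N}_{\bs{i}}}^{m}/\delta$, while by construction the argmin $\xi_{F_{j}}^{*}(\delta)$ of the cost function $\mc{B}_{F_{j}}(\xi,\delta)$ is the analogous inverse taken for the bounding function $F_{j}$. Under the hypothesis that $F_{l}, F_{u}$ are both monotonic in the same sense as $\mc{C}$, each of these inverses is well-defined wherever the target value lies in the image of the corresponding function.

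The central step is to evaluate the pointwise bound $F_{l}(\xi) \leq \mc{C}(\xi) \leq F_{u}(\xi)$ at $\xi = \xi_{\mc{N}}^{*}$. Using Corollary~\ref{corollary:Thresh1_Xi}, this immediately yields
\begin{equation}
F_{l}(\xi_{\mc{N}}^{*}) \leq \frac{\mc{C}_{\mc{N}_{\bs{i}}}^{m}}{\delta} \leq F_{u}(\xi_{\mc{N}}^{*}),
\end{equation}
while by the definition of the argmin one has $F_{l}(\xi_{F_{l}}^{*}(\delta)) = F_{u}(\xi_{F_{u}}^{*}(\delta)) = \mc{C}_{\mc{N}_{\bs{i}}}^{m}/\delta$. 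Applying the monotone inverses $F_{l}^{-1}$ and $F_{u}^{-1}$ to these value-inequalities converts them into inequalities between the arguments $\xi_{F_{l}}^{*}(\delta)$, $\xi_{\mc{N}}^{*}$ and $\xi_{F_{u}}^{*}(\delta)$. The direction of each inequality is flipped or preserved according to the sign of monotonicity; splitting into the two cases of decreasing and increasing $\mc{C}$ then produces precisely the two orderings displayed in Eq.~(\ref{eq:BoundOrder}), and identifies $\xi_{\mc{N}}^{*}$ as a maximum or minimum threshold respectively.

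The only real subtlety is an existence issue: the argmin defining $\xi_{F_{j}}^{*}(\delta)$ is a genuine solution of $F_{j}(\xi) = \mc{C}_{\mc{N}_{\bs{i}}}^{m}/\delta$ only when the target value lies in the codomain $\mc{Y}_{j}$ of $F_{j}$. This is the same attainability assumption on $\mc{C}_{\mc{N}_{\bs{i}}}^{m}$ that was imposed in Corollary~\ref{corollary:Thresh1_Xi}, and one should verify that attainability for $\mc{C}$, together with the pointwise bound $F_{l} \leq \mc{C} \leq F_{u}$, propagates to both bounding functions in the regime of interest. Once that is handled, the remainder is purely bookkeeping with monotone inverses, and no new analytic input beyond Corollary~\ref{corollary:Thresh1_Xi} is required.
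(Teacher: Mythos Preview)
Your proposal is correct and follows essentially the same route as the paper: both arguments use the pointwise sandwich $F_{l}\leq\mc{C}\leq F_{u}$, solve $F_{j}(\xi)=\mc{C}_{\mc{N}_{\bs{i}}}^{m}/\delta$ for $j\in\{l,u\}$, and then appeal to monotonicity (and the codomain/attainability caveat) to order the resulting values around $\xi_{\mc{N}}^{*}$. If anything, your presentation is slightly more explicit than the paper's in spelling out the inversion step (evaluating the bound at $\xi_{\mc{N}}^{*}$ and applying $F_{j}^{-1}$), whereas the paper simply asserts that the two solutions ``appropriately bound'' the threshold.
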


\begin{proof} 
We may not know $\mc{C}(\xi_{\bs{xy}})$ exactly, but instead have a pair of single-edge bounding functions $F_u$ and $F_l$ which similarly depend on the same single-edge parameter 
\begin{equation}
F_l(\xi_{\bs{xy}}) \leq \mc{C}(\xi_{\bs{xy}}) \leq F_u(\xi_{\bs{xy}}).
\end{equation}
When this is the case, it's not possible to determine the network threshold parameter $\xi_{\mc{N}}^*$ exactly. Instead, we can provide upper and lower bounds on its value. We can always demand that
\begin{gather}
F_u(\xi_{\bs{xy}}) \geq \mc{C}(\xi_{\bs{xy}}) \geq F_l(\xi_{\bs{xy}}) \geq  \frac{\mc{C}_{\mc{N}_{\bs{i}}}^m}{\delta},
\end{gather}
be satisfied for all $ (\bs{x},\bs{y})\in E$. Given that $F_u$ and $F_l$ are both monotonic with respect to $\xi_{\bs{xy}}$, by independently solving the following pair of equations
\begin{align}
F_l(\xi) =  \frac{\mc{C}_{\mc{N}_{\bs{i}}}^m}{\delta}, \text{ and } F_u(\xi) = \frac{\mc{C}_{\mc{N}_{\bs{i}}}^m}{\delta}. \label{eq:Conds_UL}
\end{align}
there will exist a pair of unique values $\xi_{F_u}^{*}$ and $\xi_{F_l}^{*}$ which appropriately bound the threshold parameter.

Once again it is important to be careful here, and so let us define our capacity bounding functions more formally. The function $F_k$ (for $k\in\{u,l\}$) is that which maps single-edge network parameters $\xi$ from a domain $\mc{X}$ of possible values $\xi \in \mc{X}$, to a codomain $\mc{Y}_{F_k}$ of potential bounding values $F_k(\xi) \in \mc{Y}_{F_k}$. The meaningful part of this codomain is necessarily a subset of the set of non-negative real numbers $\mathbb{R}_0^+$ so that the outputs of $F_k$ represent valid capacity values (bounding functions may have less well behaved codomains, but this is always rectifiable by truncating such regions). More precisely,
\begin{equation}
F_{k} : \mc{X} \rightarrow \mc{Y}_{{F_k}}\subseteq \mathbb{R}_0^+. 
\end{equation}
Since we are only interested in capacity bounding functions $F_{k}(\xi)$ which are monotonic with respect to $\xi$, then this is necessarily a one-to-one correspondence.
Consider a desired min-neighbourhood capacity $\mc{C}_{\mc{N}_{\bs{i}}}^m$ and an appropriate single-edge capacity bounding function $F_k(\xi)$ which is monotonic with respect to the network parameter $\xi$. A network threshold parameter bounding value $\xi_{F_k}^{*}(\delta)$ for $k\in \{u,l\}$ will exist if and only if $\mc{C}_{\mc{N}_{\bs{i}}}^m/\delta$ falls within in the codomain $\mc{Y}_{F_k}$. Otherwise, there will not exist a physical value of $\xi$ for which $\mc{C}_{\mc{N}_{\bs{i}}}^m$ is attainable. 

Therefore, a network threshold parameter $\xi_{\mc{N}}^*$ is bounded according to $\xi_{F_u}^{*}$ and $\xi_{F_l}^{*}$ such that
\begin{align}
\xi_{F_k}^{*} \defeq \argmin_{\xi} \mc{B}_{F_k}(\xi,\delta) \iff \frac{\mc{C}_{\mc{N}_{\bs{i}}}^{m}}{\delta} \in \mc{Y}_{F_k}.
\end{align}
where  $k \in \{u,l\}$ is used to indicate whether it is derived using the upper or lower bounding function $F_k$ respectively. 

As a result, for physical single-edge properties $\xi$ for which the capacity function is monotonic, there exists a critical threshold parameter in the network $\xi_{\mc{N}}^*$ for which the capacity conditions in Theorem~\ref{theorem:Thresh1} are satisfied, and thus the end-to-end capacity is guaranteed to satisfy Eq.~(\ref{eq:PerfBounds}). Even when the capacity function is not exactly known, we can instead use upper and lower capacity bounding functions $F_u$ and $F_l$ to appropriately bound $\xi_{\mc{N}}^*$. The nature of these bounds depends on the increasing or decreasing nature of the monotonic bounding functions as shown in the corollary.
\end{proof}\\

Finally, an analogous result can be presented for exactly guaranteed optimal performance.

\begin{corollary}
Consider Corollary~\ref{corollary:Thresh2_Xi} and a single-edge channel property $\xi_{\bs{xy}}$ for which the point-to-point capacity $\mc{C}(\xi_{\bs{xy}})$ is monotonically bounded by lower and upper bounding functions $F \in \{ F_l, F_u\}$ respectively. Then the threshold parameter $\xi_{\mc{N}}^{*}$ satisfies
\begin{equation}
{\xi_{F_{j}}^{*}(\delta)} \leq \xi_{\mc{N}}^{*} \leq  {\xi_{F_{k}}^{*}(\delta)} 
\end{equation}
and the user-connected edge threshold parameter satisfies
\begin{equation}
{\xi_{F_{j}}^{*}(\omega)} \leq \xi_{\mc{N}_{\bs{i}}}^{*} \leq  {\xi_{F_{k}}^{*}(\omega)} 
\end{equation}
where $j\neq k \in \{l, u\}$ label the order of the bounds and are the same as in Eq.~(\ref{eq:BoundOrder}).
\label{corollary:Thresh1_Bound}
\end{corollary}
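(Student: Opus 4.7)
The plan is to mirror the bracketing strategy developed in the preceding corollary, but now applied twice—once for each of the two threshold conditions that Corollary~\ref{corollary:Thresh2_Xi} demands. Recall that under Theorem~\ref{theorem:Thresh2}, exact optimal performance $\mc{C}^m(\bs{i},\mc{N}) = \mc{C}_{\mc{N}_{\bs{i}}}^m$ is guaranteed whenever two single-edge capacity thresholds are simultaneously met: $\mc{C}(\xi_{\bs{xy}}) \geq \mc{C}_{\mc{N}_{\bs{i}}}^m/\delta$ for $(\bs{x},\bs{y}) \in E^{\prime}$, and $\mc{C}(\xi_{\bs{xy}}) \geq \mc{C}_{\mc{N}_{\bs{i}}}^m/\omega$ for $(\bs{x},\bs{y}) \in E_{\bs{\alpha}}\cup E_{\bs{\beta}}$. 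Because these two inequalities are structurally identical up to the scaling parameter $x \in \{\delta,\omega\}$, the argument of Corollary~\ref{corollary:Thresh1_Bound} transplants directly once we treat each scale separately.

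First I would observe that the exact thresholds of Corollary~\ref{corollary:Thresh2_Xi} are precisely the unique solutions of $\mc{C}(\xi_{\mc{N}}^*) = \mc{C}_{\mc{N}_{\bs{i}}}^m/\delta$ and $\mc{C}(\xi_{\mc{N}_{\bs{i}}}^*) = \mc{C}_{\mc{N}_{\bs{i}}}^m/\omega$, which exist and are unique by monotonicity of $\mc{C}(\xi)$. Next, using the sandwich $F_l(\xi) \leq \mc{C}(\xi) \leq F_u(\xi)$ and the monotonicity of both bounding functions, I would invoke the same codomain condition as before, $\mc{C}_{\mc{N}_{\bs{i}}}^m/x \in \mc{Y}_{F_k}$, to guarantee that the auxiliary solutions $\xi_{F_k}^*(x) = \argmin_\xi \mc{B}_{F_k}(\xi,x)$ are well-defined for $k \in \{l,u\}$ and for each scale $x \in \{\delta,\omega\}$.

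The bracketing then follows pointwise. If $\mc{C}(\xi)$ is decreasing, then $F_l(\xi_{\mc{N}}^*) \leq \mc{C}(\xi_{\mc{N}}^*) = \mc{C}_{\mc{N}_{\bs{i}}}^m/\delta = F_l(\xi_{F_l}^*(\delta))$ forces $\xi_{F_l}^*(\delta) \leq \xi_{\mc{N}}^*$ (since decreasing $F_l$ reverses the inequality), and symmetrically $\xi_{\mc{N}}^* \leq \xi_{F_u}^*(\delta)$; the case of increasing $\mc{C}$ flips both inequalities. The identical calculation with $\delta$ replaced by $\omega$ brackets $\xi_{\mc{N}_{\bs{i}}}^*$. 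The ordering convention in Eq.~(\ref{eq:BoundOrder}) is therefore inherited verbatim for both scales, giving the two claimed bracketings.

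The only real obstacle is bookkeeping. One must be careful that the role of $F_l$ versus $F_u$ as the "inner" or "outer" bracket depends only on the monotonicity direction of $\mc{C}$ and \emph{not} on whether we are solving with $\delta$ or $\omega$—the argument is invariant under this substitution because $\delta,\omega > 0$ and the bounding functions themselves are unchanged. A useful sanity check is that $\omega \geq \delta$ (which follows from $\omega = \delta(k-1)/(\delta-k+1)$ for $\delta > k-1$), so the user-connected threshold parameter is more restrictive than the bulk one, consistent with the fact that user-adjacent edges carry stricter capacity demands. Once both bracketings are in place, enforcing the stronger (outer) bound on each sub-network immediately implies the hypotheses of Corollary~\ref{corollary:Thresh2_Xi}, and exact optimality $\mc{C}^m(\bs{i},\mc{N}) = \mc{C}_{\mc{N}_{\bs{i}}}^m$ is guaranteed.
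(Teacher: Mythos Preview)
Your proposal is correct and takes essentially the same approach as the paper: both simply observe that the bracketing argument of the preceding corollary transplants verbatim to each of the two threshold conditions (scaling by $\delta$ for the bulk and by $\omega$ for user-connected edges), with the paper compressing this into a single sentence. Your explicit sanity check about the relative size of $\omega$ and $\delta$ is tangential and not needed for the argument to go through.
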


\begin{proof}
This result is identical to Corollary~\ref{corollary:Thresh1_Bound} with the additional bounding of the user-connected threshold quantity $\xi_{\mc{N}_{\bs{i}}}^{*}$ in order to guarantee optimal performance, which can be achieved in the same way.
\end{proof}\\

For the networks, capacity bounds and critical parameters studied in this paper, monotonicity was satisfied; allowing us to exploit these results. However, it is still possible to glean valuable information about threshold parameters when the capacity function (or its bounding functions) is not monotonic with respect to a single-edge network property $\xi_{\bs{xy}}$. 
When this is the case, Eq.~(\ref{eq:Conds_UL}) will not have unique solutions, but there will exist a number of suitable values of $\xi$ for each bounding function. This gives rise to \textit{threshold ranges} of values of $\xi$ for which the flooding capacity will have performance guarantees. This extension is intuitive, and it is left for future studies wherever it is physically relevant.

\section{Minimum Nodal Densities\label{sec:NodalDens}}

Nodal density is defined as the average number of nodes contained per unit area of the network. For a network $\mc{N}$ consistent of $n$ nodes over an area $A$, the network nodal density $\rho_{\mc{N}}$ satisfies,
\begin{equation}
\rho_{\mc{N}} \defeq \frac{n}{A}.
\end{equation} 
In the context of quantum networks, nodal density is particularly vital as it is deeply connected to the distribution of network link lengths and connectivity behaviour.

The threshold theorems described in Ref.~\cite{OPGQN} and Appendix~\ref{sec:ThreshTheorems} allow for the derivation of a critical properties for \textit{individual channels} in a quantum WRN. This is remarkably useful for quantum network design, as it motivates and guides the construction of large-scale networks through properties local to each node and each network channel. However, our threshold theorems are unable to directly access \text{global network features} such as nodal density. To directly translate Theorems~\ref{theorem:Thresh1} and \ref{theorem:Thresh2} for global network properties would require the ability to express single-edge capacities as functions of global network features. If possible, such an approach would need a statistical treatment and is beyond the scope of our research.

Fortunately, it is still possible to access important nodal density information. A crucial threshold parameter that can be considered through threshold theorems is the maximum link-length, $d_{\mc{N}}^{\max}$. Clearly, $d_{\mc{N}}^{\max}$ is intimately linked to the nodal density. If $d_{\mc{N}}^{\max}$ is large, then the network may adopt sparser configurations; if it is small, the network will need to be dense in order to maintain performance and connectivity. As a result, existence of a maximum link-length constrains the possible nodal densities that a network can adopt. 

Consider a class of network $\textsf{N} = \{ \mc{N}_j \}_j$ which obeys well defined connectivity rules (how nodes and edges are connected) and a maximum link-length, $d_{\mc{N}}^{\max}$. Then we can define a \textit{minimum nodal density} $\rho_{\mc{N}}^{\min}$ as the least dense way that any $\mc{N}_{j} \in \textsf{N}$ can be constructed according to these constraints. More precisely,
\begin{equation}
\rho_{\mc{N}}^{\min} \defeq \min_{\mc{N} \in \textsf{N}} \rho_{\mc{N}}
\end{equation}
where the minimisation is performed over all possible instances of the network class. Solving this minimisation may be called finding the sparse construction of $\textsf{N}$.

For network models with a maximum link-length, the minimum nodal density can always be expressed as
\begin{equation}
\rho_{\mc{N}}^{\min} = \xi (d_{\mc{N}}^{\max})^{-2}. \label{eq:MinND}
\end{equation}
Here, $\xi$ is some characteristic function of the network class being considered, and is found by solving the sparse construction. In general, this task is extremely challenging or the number of possible network instances simply too large to minimise. For WRNs, it is actually rather easy. Indeed, the regular and orderly structure imposed by weak-regularity ensures that the sparsest construction of its network is readily achieved by maximising the length of every edge and adopting the most symmetrical/ordered geometric orientation. 

For the WRNs considered in this work, we utilise lower-bounds on $\xi$ based on the sparse construction of infinite WRNs defined over infinite areas. Indeed, for the $k=6$ design used in Fig.~\ref{fig:ADWRN} we use $\xi \geq 2/\sqrt{3}$, while for the $k=8$ design in Fig.~\ref{fig:WR_TL} we find that $\xi\geq 2$. Using these values and Eq.~(\ref{eq:MinND}), it is possible to identify a relationship between nodal density and optimal network performance by supplementing a value of $d_{\mc{N}}^{\max}$ that guarantees specific performance bounds. For more details and architecture specific derivations, see the Supplementary Material of Ref.~\cite{OPGQN}.

\section{Compound Thermal-Loss Channels \label{sec:CompDer}}

Consider a compound channel of $N$ thermal-loss channels $\mc{E}_{\tau_j,\bar{n}_j}$ each with transmissivity $\tau_{j}$ and output thermal photon number $\bar{n}_j$ for $j\in\{1,\ldots, N\}$. A compound channel of these $N$ thermal-loss channels can be summarised into a single thermal-loss channel with total transmissivity and thermal noise parameters $\tau_{\text{tot}}$ and $\bar{n}_{{\text{tot}}}$. More precisely,
\begin{equation}
\mc{E}_{\tau_{\text{tot}}, \bar{n}_{\text{tot}}} = \mc{E}_{\tau_N,\bar{n}_N} \circ \mc{E}_{\tau_{N-1},\bar{n}_{N-1}} \circ \cdots \circ \mc{E}_{\tau_1,\bar{n}_1}. \label{eq:Comp}
\end{equation}
We wish to determine the relationship between $\tau_{\text{tot}}$, $\bar{n}_{{\text{tot}}}$ and the individual sub-channel properties. 

We can do this by inspecting the action of the compound channel on a single-mode Gaussian state, $\rho$. The relationship proves to be independent of the initial state considered, therefore this is sufficient without loss of generality. Let this initial single-mode state have the covariance matrix (with zero first moments) $V_0^{\text{in}}$
which is a real, symmetric, positive definite matrix. A thermal-loss channel $\mc{E}_{\tau_j,\bar{n}_j}$ equates to the mixing of the input state on a $\tau_j$ transmissive beam-splitter with an environmental thermal state of $\bar{n}_{j}/(1-\tau_j)$ mean photons. 
Hence, we can monitor the transformations induced by the compound channel in Eq.~(\ref{eq:Comp}) by recursively applying each transformation to the initial state $V_0^{\text{in}}$. 

After the first channel $\mc{E}_{\tau_1,\bar{n}_1}$, the output state has the covariance matrix
\begin{equation}
V_1^{\text{out}} = \tau_1 V_0^{\text{in}} + \epsilon_1  I_2,
\end{equation}
where $I_2$ is the $2\times 2$ identity matrix, and we have defined the noise quantity for the $j^{\text{th}}$ sub-channel
\begin{equation}
\epsilon_j \defeq  \Big(\frac{\bar{n}_{j}}{|1-\tau_j|}+\frac{1}{2} \Big)|1-\tau_j| =  \bar{n}_{j}+\frac{1}{2}|1-\tau_j|.
\end{equation}
All subsequent sub-channels result in the following sequence of transformations,
\begin{align}
&V_2^{\text{out}} = \tau_2 V_1^{\text{out}} + \epsilon_2 I_2,\\
&\hspace{2mm}\vdots \nonumber \\
&V_{N-1}^{\text{out}} = \tau_{N-1} V_{N-2}^{\text{out}} + \epsilon_{N-1} I_2, \\
&V_N^{\text{out}} = \tau_N V_{N-1}^{\text{out}} + \epsilon_N I_2.
\end{align}
By inspection of these recursive formula, it is clear that the final output state $V_N^{\text{out}}$ can be rewritten in the form
\begin{equation}
V_N^{\text{out}} = \Big(\prod_{j=1}^N \tau_j\Big) V_0^{\text{in}} + \xi_N \cdot I_2, \label{eq:CompForm}
\end{equation}
where we define $\xi_N \in \mathbb{R}$ is a thermal additive noise coefficient at the $N^{\text{th}}$ channel output state. This function is recursive, as it compounds the noise properties from all previous channels. For $j=1$, it takes the initial value $\xi_1 = \left(\bar{n}_{1}+\frac{1}{2}\right)|1-\tau_1|$, while for all further values $ 1 < j \leq N$ it takes the form
\begin{align}
\xi_j = \tau_j \xi_{j-1} + \bar{n}_{j}+\frac{1}{2}|1-\tau_j|.
\end{align}
This can be easily solved via recursive techniques, and the final additive noise term after $N$ thermal-loss channels is given by
\begin{equation}
\xi_N = \epsilon_N + \sum_{j=1}^{N-1} \bigg( \epsilon_{N-j} \prod_{i=N-j+1}^N\tau_i \bigg).
\end{equation}

It is then possible to derive relationships between $\tau_{\text{tot}}$, $\bar{n}_{\text{tot}}$ and all the individual $\tau_{j}$, $\bar{n}_{j}$ values by equating Eq.~(\ref{eq:CompForm}) to the transformation induced by a single thermal-loss channel with total loss/noise properties,
\begin{equation}
V_N^{\text{out}} = \tau_{\text{tot}} V_0^{\text{in}} + \Big(\bar{n}_{\text{tot}}+\frac{1}{2}|1-\tau_{\text{tot}}| \Big) I_2, \label{eq:TotForm}
\end{equation}
Then these parameters can be quickly identified. As a result, we find that they admit the forms
\begin{align}
\tau_{\text{tot}} &\defeq \prod_{j=1}^N \tau_j,\\
\bar{n}_{\text{tot}} &\defeq {\xi_N} - \frac{1}{2}{|1-\tau_{\text{tot}}|}.
\end{align}
These results can then be specifically applied to the compound channels in the main text. They are clearly independent of the initial state and are thus universal for compound thermal-loss channels.

\section{Compound Amplitude Damping Channels \label{sec:DampDer}}
An $N$ length compound channel of amplitude damping channels each with damping probability $p_j$ can be simply rewritten as a single amplitude damping channel with a total damping probability
\begin{equation}
\mc{E}_{p_{\text{tot}}} = \mc{E}_{p_N} \circ \mc{E}_{p_{N-1}} \circ \cdots \circ \mc{E}_{p_1}. \label{eq:ADComp}
\end{equation}
The relationship between $p_{\text{tot}}$ and the damping probability of each sub-channel can be easily derived. Indeed, let us consider a generic single qubit state as an input into the first channel,
\begin{equation}
\rho^{\text{in}}_0 = \begin{pmatrix} 1 - \gamma & c^* \\ c & \gamma \end{pmatrix}.
\end{equation}
where $\gamma, c \in \mathbb{C}$. A single amplitude damping channel invokes the transformation,
\begin{align}
\rho^{\text{out}}_1 &= \sum_{i=0,1} K_i \rho_0^{\text{in}} K_i^{\dag},\\
 &= \begin{pmatrix} 1 - (1-p_1)\gamma & c^* \sqrt{1-p_1} \\ c \sqrt{1-p_1} & \gamma (1-p_1) \end{pmatrix},
\end{align}
where $K_i$ are the Kraus operators of the channel, such that $K_0 = \ket{0}\!\bra{1} + \sqrt{1-p}\ket{1}\!\bra{1}$ and $K_1 = \sqrt{p} \ket{0}\!\bra{1}$. It is clear we can rewrite this output state in the form,
\begin{equation}
\rho^{\text{out}}_1 = \begin{pmatrix} 1 - \gamma_1 & c_1^* \\ c_1 & \gamma_1 \end{pmatrix}, 
\end{equation}
where $\gamma_1 = {1-p_1} \gamma$ and $c_1 = c \sqrt{1-p_1}$. Hence, the subsequent action of $N-1$ further amplitude damping channels will result in the output state
\begin{equation}
\rho^{\text{out}}_N = \begin{pmatrix} 1 - \gamma_N & c_N^* \\ c_N & \gamma_N \end{pmatrix},
\end{equation}
where the parameters of the $N^{\text{th}}$ output state are
\begin{equation}
\gamma_N = \gamma \prod_{j=1}^N (1-p_j),~~c_N = c \prod_{j=1}^N \sqrt{1-p_j}.
\end{equation}
As a result, we can equate the action of the $N$ compound channel in Eq.~(\ref{eq:ADComp}) to that of a single amplitude damping channel with total damping probability,
\begin{equation}
p_{\text{tot}} \defeq 1 - \prod_{j=1}^N (1-p_{j}).
\end{equation}

\section{Practical CV-QKD Setups \label{sec:LO_App}}

CV quantum communication protocols which make use of coherent detection (such as homodyne or heterodyne measurements) require the use of a local-oscillator (or phase reference), as discussed in the main text.  
A phase reference allows the sender and receiver to exploit both quadratures of the mode, and can be established via a transmitted local-oscillator (TLO) or local local-oscillator (LLO). A TLO is an additional mode which is co-propagated along with signal-mode from the sender to receiver, carrying the relevant phase information. An LLO uses interleaving signal pulses with bright reference pulses which are used to reconstruct the local-oscillator locally at the receiver \cite{AdvCrypt}. The uniqueness of these techniques lead to unique noise/loss sources at the receiver, manifesting in different performance characteristics. For details, see Appendix B of Ref.~\cite{FS}.

\begin{table}[t!]
\begin{tabular}{ |l|c|c| } 
 \hline
\textit{\hspace{8.5mm}Parameter} & ~\textit{Symbol}~ & ~\textit{Value}~ \\ 
 \hline  \hline
Wavelength & $\lambda$ & 800~nm  \\  \hline
Detector Efficiency & $\tau^{r}$ & 0.8 \\  \hline
Detector Shot-noise & $\nu^{r}$ & $\begin{array}{l} 1\hspace{1.75mm}\text{- Homodyne}\\ 2\hspace{1.75mm}\text{- Heterodyne}\end{array}$ \\  \hline
Detector bandwidth & $W$ & 100 MHz\\ \hline
Channel noise & $\bar{n}_B$ & 0.002 \\ \hline
Noise Equivalent Power & NEP & 6 pW/$\sqrt{\text{Hz}}$\\ \hline
LO Power & $P_{\text{LO}}$ & 100 mW\\ \hline
Line width & $l_{\text{W}}$ & 1.6 KHz\\ \hline
Clock & $C$ & 5 MHz \\ \hline
Pulse Duration  & $\Delta t$, $\Delta t_{\text{LO}}$ & $10$ ns \\   \hline
Modulation  & $\mu$ & $10$  \\   \hline
\end{tabular}
\caption{Parameter table for realistic CV-QKD protocols through optical-fibre channels. }
\label{table:Setups}
\end{table}

Let us consider CV quantum communication between two arbitrary nodes $\bs{x},\bs{y} \in P$ using the physical channel direction $\bs{x} \rightarrow\bs{y}$. Firstly, we may assume that upon transmission, both LO techniques induce negligible loss or noise, i.e.~$\bar{n}_{\bs{x}}^{s,\text{TLO}} \approx \bar{n}_{\bs{x}}^{s,\text{LLO}} \approx 0$ and $\tau_{\bs{x}}^{s,\text{TLO}} \approx \tau_{\bs{x}}^{s,\text{LLO}} \approx 0$. But at the receiver, there will be decoherence associated with sub-optimal detection and potential phase errors. The detector will not be perfectly efficient, limited by lossy fibre-couplings and limited quantum efficiency. This is captured by the detection loss, given by $\tau_{\bs{y}}^{r} = \tau_{\text{eff}}$. 

There will also be noise induced by the detector and the local-oscillator. The use of an LLO introduces phases errors when it is being reconstructed at the receiver, since this reconstruction will never be perfect. As such, we conceive a phase noise parameter
\begin{equation}
\Theta_{\text{ph}} \defeq \frac{\pi (\mu - 1) l_{\text{W}}}{C},
\end{equation}
where $\mu$ describes the modulation of the transmitted pulses, $C$ is the operational rate (clock) and $l_{\text{W}}$ is the average linewidth of the light source. 

Meanwhile, both techniques will be exposed to electronic noise due to sub-optimal detection and imperfect fibre-couplings. Let us define the electronic noise parameter,
\begin{equation}
 \Theta_{\text{el}} \defeq \frac{\nu_{\text{det}} \text{NEP}^2 W\Delta t_{\text{LO}}}{2h\nu P_{\text{LO}}^{\text{det}}}.
\end{equation}
Here, $\nu_{\text{det}}$ is the detector shot-noise where $\nu_{\text{det}} = 1$ for homodyne and $\nu_{\text{det}} = 2$ for heterodyne. This quantity also depends on the noise equivalent power ($\text{NEP}$), the bandwidth $W$, the duration of the local-oscillator pulse $\Delta t_{\text{LO}}$, the frequency of the light $\nu$, and the power of the local-oscillator at detection, $P_{\text{LO}}^{\text{det}}$. Since the LLO is reconstructed locally, the power at detection is simply the desired power $P_{\text{LO}}^{\text{det}} = P_{\text{LO}}$. On the other hand, using a TLO diminishes its power at detection due to loss suffered throughout its transmission $P_{\text{LO}}^{\text{det}} = \eta_{\bs{xy}} \tau_{\text{eff}} P_{\text{LO}}$. 

Collecting these noise sources for each method, we can write
\begin{align}
&\bar{n}_{\bs{y}}^{r,\text{LLO}} \approx \eta_{\bs{xy}} \tau_{\text{eff}} \Theta_{\text{ph}} + \Theta_{\text{el}},~~\bar{n}_{\bs{y}}^{r,\text{TLO}} \approx \frac{\Theta_{\text{el}}}{\eta_{\bs{xy}} \tau_{\text{eff}}}.
\end{align}

It is clear that there is a trade-off between the phase-errors induced by a LLO and the electronic noise induced by a TLO. The reciprocal dependence of $\bar{n}_{\bs{y}}^{r,\text{TLO}}$ on the channel transmissivity means that as longer distances it will introduce greater levels of noise (which is precisely the behaviour shown in Fig.~\ref{fig:WR_TL}). As such, it becomes wise to make use of the LLO method, yet its technical requirements are somewhat more demanding. Table~\ref{table:Setups} collects typical values for the setup parameters which contribute to these noise quantities when considering optical-fibre connections.

\end{document}